\documentclass[11pt,onecolumn,doublespace,draftclsnofoot]{IEEEtran}

\usepackage{epsfig,amsmath,graphicx,subfigure}
\usepackage{times,rawfonts}
\usepackage{pdfsync}
\usepackage{hyperref}
\IEEEoverridecommandlockouts

\newtheorem{lemma}{Lemma}
\newtheorem{theorem}{Theorem}

\title{Secret Sharing LDPC Codes for the BPSK-constrained Gaussian
  Wiretap Channel}

\author{Chan Wong Wong, Tan F. Wong, and John M. Shea\thanks{
    This work was supported by the National Science Foundation under
    grant number CNS-0626863.}}

\begin{document}
\maketitle 

\begin{abstract}
  The problem of secret sharing over the Gaussian wiretap channel is
  considered. A source and a destination intend to share secret
  information over a Gaussian channel in the presence of a wiretapper
  who observes the transmission through another Gaussian channel. Two
  constraints are imposed on the source-to-destination channel;
  namely, the source can transmit only binary phase shift keyed (BPSK)
  symbols, and symbol-by-symbol hard-decision quantization is applied
  to the received symbols of the destination. An error-free public
  channel is also available for the source and destination to exchange
  messages in order to help the secret sharing process. The wiretapper
  can perfectly observe all messages in the public channel.  It is
  shown that a secret sharing scheme that employs a random ensemble of
  regular low density parity check (LDPC) codes can achieve the key
  capacity of the BPSK-constrained Gaussian wiretap channel
  asymptotically with increasing block length. To accommodate
  practical constraints of finite block length and limited decoding
  complexity, fixed irregular LDPC codes are also designed to replace
  the regular LDPC code ensemble in the proposed secret sharing
  scheme.
\end{abstract}

\section{Introduction}\label{sec:intro}
Physical-layer security schemes exploit channel characteristics, such
as noise and fading, to allow a group of nodes to share information in
such a way that other unintended receivers (called eavesdroppers or
wiretappers) cannot recover that secret information.  Physical-layer
security has often been studied in the context of the {\it wiretap
  channel}, which was first introduced by Wyner~\cite{Wyner1975} and
later refined by Csisz\'{a}r and K\"{o}rner~\cite{Csiszar1978}. In the
wiretap channel, a source tries to send secret information to a
destination at the presence of a wiretapper. When the
source-to-wiretapper channel is a degraded version of the
source-to-destination channel, Wyner~\cite{Wyner1975} showed that the
source can transmit a message at a positive (secrecy) rate to the
destination by taking advantage of the less ``noisy'' channel to the
destination.  The degradedness condition was removed
in~\cite{Csiszar1978}, which showed that a positive secrecy rate is
possible for the case where the source-to-destination channel is
``more capable'' than the source-to-wiretapper channel. Generalization
of Wyner's work to the Gaussian wiretap channel was considered
in~\cite{Leung1978}.

In Wyner's original paper, a code design based on group codes was
described for the wiretap channel.  In~\cite{Ozarow1984}, a code
design based on coset codes was suggested for the type-II (the
destination channel is error free) binary erasure wiretap channel.
Recently, the authors of~\cite{Thangaraj2007} constructed low density
parity check (LDPC) based wiretap codes for binary erasure channel
(BEC) and binary symmetric channel
(BSC). Reference~\cite{LiuRuoheng07} considered the design of secure
nested codes for type-II wiretap channels.  More recently,
References~\cite{Mahdavifar2010} and \cite{Koyluoglu2010} concurrently
established the result that polar codes~\cite{Arikan2009} can achieve
the secrecy capacity of the degraded binary-input symmetric-output
(BISO) wiretap channels. Note that all these designs are for codes
with asymptotically large block lengths.

In some scenarios, it is sufficient for two nodes to agree upon a
common secret (a key), instead of having to send secret information
from a source to a destination.  Under this relaxed criterion, it is
shown in~\cite{Maurer1993} that, with the use of a feedback channel, a
positive key rate is achievable when the destination and wiretapper
channels are two conditionally independent (given the source input
symbols) memoryless binary channels, even if the destination channel
is not more capable than the wiretapper's channel.  This notion of
secret sharing is formalized in~\cite{Ahlswede1993} based on the
concept of \emph{common randomness} between the source and
destination.  A three-phase process of achieving secret sharing over a
wiretap channel with an additional public channel between the source
and destination is suggested in~\cite{Maurer1993}.  The three phases
are respectively advantage distillation, information reconciliation,
and privacy amplification.  Advantage distillation aims to provide the
destination an advantage over the wiretapper.  Information
reconciliation aims at generating an identical random sequence at both
the source and destination.  Privacy amplification is the step that
extracts a secret key from the identical random sequence agreed upon
by the source and destination.

Information reconciliation is probably the most studied and most
essential part of any secret sharing scheme.
Perhaps the most well known practical application of reconciliation
protocols is quantum cryptography, where nonorthogonal states of a
quantum system provide two terminals with observations of correlated
randomness which are at least partially secret from a potential
eavesdropper.  Many
works~\cite{Brassard94}--\nocite{Nguyen04,VanAssche2004,Muramatsu2006,Ye2006,Bloch2008}\cite{Elkouss2009}
have been devoted to the study of reconciliation for both discrete and
continuous random variables in quantum key distribution schemes.  For
the case of discrete random variables, Cascade is an iterative
reconciliation protocol first proposed by Brassard and Salvail
in~\cite{Brassard94}.
Recently, BSC-optimized LDPC codes have been employed
in~\cite{Elkouss2009} to reduce the interactivity and improve the
efficiency of Cascade.  On the other hand, the work on slice error
correction~\cite{VanAssche2004}, which converts continuous variables
into binary strings and makes use of interactive error correcting
codes, is the first reconciliation protocol for continuous random
variables.  Modern coding techniques like turbo codes~\cite{Nguyen04}
and LDPC codes~\cite{Ye2006,Bloch2008} have also been directly applied
within information reconciliation protocols for continuous random
variables.

Another application of reconciliation protocols is secret key
agreement over wireless channels.  An LDPC code-based method of
extracting secrecy from jointly Gaussian random sources generated by a
Rayleigh fading model has been studied in~\cite{Ye2006}.
In~\cite{Bloch2008}, multilevel coding/multistage decoding-like
reconciliation with LDPC codes has been proposed for the quasi-static
Rayleigh fading wiretap channel.  In~\cite{Klinc2009}, punctured LDPC
codes were employed in a coding scheme for the Gaussian wiretap
channel to reduce the security gap, which expresses the quality
difference between the destination channel and wiretapper channel
required to achieve a sufficient level of security.  The main idea of
this scheme is to hide the information bits from the wiretapper by
means of puncturing. In~\cite{Baldi2010}, further reductions in the
security gap are achieved using a reconciliation scheme based on
non-systematic LDPC codes along with scrambling of the information
bits prior to encoding.

In this paper, we consider the problem of secret sharing over the
Gaussian wiretap channel with the constraints of binary phase-shift
keyed (BPSK) source symbols and symbol-by-symbol hard-decision
quantization at the destination. Our main goal is to develop a coding
structure based on which practical ``close-to-capacity'' secret
sharing (key agreement) codes can be constructed. Finite block length
and moderate encoder/decoder complexity are the two main practical
constraints that we consider when designing these codes.  The secrecy
performance of our designs will be measured by the rate of secret
information shared between the source and destination (which will be
referred to as the \emph{key rate}) as well as the rate of information
that is leaked to the wiretapper through all its observations of the
wiretap and public channels (which will be referred to as the
\emph{leakage rate}).

To rigorously gauge the secrecy performance of our code designs, we
introduce the notion of relaxed key capacity in
Section~\ref{sec:relax_keycap}. The relaxed key capacity is the
maximum key rate that can be achieved over the wiretap channel
provided that the leakage rate is bounded below a fixed value. In
Section~\ref{sec:ch_mod}, we calculate the relaxed key capacities over
the BPSK source-constrained Gaussian wiretap channel with and without
the constraint of hard-decision quantization at the destination. In
Section~\ref{sec:ssldpc}, we present a secret sharing scheme employing
an ensemble of regular LDPC codes for the BPSK-constrained Gaussian
wiretap channel with hard-decision quantization at the destination.
We prove that the proposed scheme achieves the relaxed key capacity
with asymptotically large block length.  We note that a similar
LDPC-based key agreement scheme employing observations of correlated
discrete stationary sources at the source, destination, and wiretapper
was studied in~\cite{Muramatsu2006}. A more detailed comparison
between our scheme and the one proposed in~\cite{Muramatsu2006} is
also provided in Section~\ref{sec:ssldpc}.  The asymptotic result in
Section~\ref{sec:ssldpc} provides us a reasonable theoretical
justification to design practical secret sharing schemes based on the
proposed coding structure.  We propose in Section~\ref{sec:wz} to
replace the regular LDPC code ensemble in Section~\ref{sec:ssldpc} by
fixed LDPC codes that are more amenable to practical
implementation. In the same section, we describe a code search
algorithm based on density evolution analysis to obtain good irregular
LDPC codes for the proposed secret sharing scheme. We also compare the
secrecy performance achieved by these irregular LDPC codes,
BSC-optimized irregular LDPC codes, and some standard regular LDPC
codes against the relaxed key capacity calculated in
Section~\ref{sec:ch_mod}.  Finally, conclusions are drawn in
Section~\ref{sec:con}.

\section{Secret Sharing and Relaxed Key
  Capacity}\label{sec:relax_keycap}

We start by reviewing the framework of secret sharing proposed in
\cite{Ahlswede1993}.  The objective of secret sharing is for the
source and destination to share secret information, which is obscure
to the wiretapper, by exploiting \emph{common
  randomness}~\cite{Ahlswede1993} available to them through the
wiretap channel.  Here, we consider the wiretap channel to be
memoryless and specified by the conditional probability density
function (pdf) $p_{Y,Z|X}(y,z|x)$. When the symbol $X$ is sent by the
source, $Y$ and $Z$ denote the corresponding symbols observed by the
destination and wiretapper, respectively. In addition, we restrict
ourselves to cases in which $Y$ and $Z$ are conditionally independent
given $X$, i.e., $p_{Y,Z|X}(y,z|x) = p_{Y|X}(y|x) p_{Z|X}(z|x)$.  This
restriction is satisfied by the Gaussian wiretap channel considered in
Section~\ref{sec:ch_mod} and some other wireless wiretap channels
\cite{Wong2009}.  For convenience, we will refer to the wiretap
channel by the triple $(X,Y,Z)$.  In addition to the wiretap channel,
there is an interactive, authenticated, public channel with unlimited
capacity between the source and destination. The source and
destination can communicate via the public channel without any power
or rate restriction.  The wiretapper can perfectly observe all
communications over the public channel but cannot tamper with the
transmitted messages.

The aforementioned common randomness is to be extracted by a proper
combination of transmission from the source to the destination through
the wiretap channel $(X,Y,Z)$ and information exchanges between them
over the public channel. To this end, we consider the class of
permissible secret sharing strategies suggested
in~\cite{Ahlswede1993}. Consider $t$ time instants labeled by
$1,2,\ldots,t$, respectively. The wiretap channel is used $n$ times
during these $t$ time instants at $i_1 < i_2 < \cdots < i_n$.  Set
$i_{n+1}=t$. The public channel is used during the other ($t-n$) time
instants. Before the secret sharing process starts, the source and
destination generate, respectively, independent random variables $M_X$
and $M_Y$.  Then a permissible strategy proceeds as
follows:\footnote{Throughout the paper, $A^i$ stands for the sequence
  of symbols $A_1, A_2, \ldots, A_i$, and $A^0$ is null.}
\begin{itemize}
\item At time instant $0<i<i_1$, the source sends message
  $\Phi_i=\Phi_i(M_X,\Psi^{i-1})$ to the destination, and the
  destination sends message $\Psi_i=\Psi_i(M_Y,\Phi^{i-1})$ to the
  source. Both transmissions are carried over the public channel.
\item At time instant $i=i_j$ for $j=1,2,\ldots,n$, the source sends
  the symbol $X_j=X_j(M_X,\Psi^{i_j-1})$ to the wiretap
  channel. The destination and wiretapper observe the corresponding
  symbols $Y_j$ and $Z_j$. There is no message exchange via the public
  channel, i.e., $\Phi_i$ and $\Psi_i$ are both null.
\item At time instant $i_j < i < i_{j+1}$ for $j=1,2,\ldots,n$, the
  source sends message $\Phi_i=\Phi_i(M_X,\Psi^{i-1})$ to the
  destination, and the destination sends message
  $\Psi_i=\Psi_i(M_Y,Y^{j},\Phi^{i-1})$ to the source. Both
  transmissions are carried over the public channel.
\end{itemize}
At the end of the $t$ time instants, the source generates its secret
key $K=K(M_X,\Psi^t)$, and the destination generates its secret key
$L=L(M_Y,Y^n,\Phi^t)$, where $K$ and $L$ take values from the same
finite set $\mathcal{K}$.

Slightly extending the achievable key rate definition in
\cite{Ahlswede1993}, for $R_l \geq 0$, we call $(R,R_l)$ an
\emph{achievable key-leakage rate pair} through the wiretap channel
$(X,Y,Z)$ if for every $\varepsilon>0$, there exists a permissible
secret sharing strategy of the form described above such that
\begin{enumerate}
\item $\Pr\{K\neq L\} < \varepsilon$,
\item $\frac{1}{n} I(K;\Phi^t, \Psi^t) < \varepsilon$,
\item $\frac{1}{n} I(K;Z^n| \Phi^t, \Psi^t) < R_l+\varepsilon$,
\item $\frac{1}{n} H(K) > R - \varepsilon$, and
\item $\frac{1}{n} \log_2 |\mathcal{K}| < \frac{1}{n} H(K) +
  \varepsilon$,
\end{enumerate}
for sufficiently large $n$. Condition 2 restricts that the public
messages (the messages conveyed through the public channel) contain a
negligible rate of information about the key, while Condition 3 limits
to $R_l$ the rate of key information that the wiretapper can extract
from its own channel observations and the public messages.  Note that
Condition 3 is trivially satisfied if $R_l \geq \frac{1}{n} \log_2
|\mathcal{K}|$.  We also note that Conditions 2 and 3 combine to
essentially give the original condition $\frac{1}{n}
I(K;Z^n,\Phi^t,\Psi^t) < \varepsilon$ of the achievable key rate
definition in~\cite{Ahlswede1993} when $R_l=0$\footnote{When $R_l>0$,
  if the combined condition $\frac{1}{n} I(K;Z^n,\Phi^t,\Psi^t) <
  R_l+\varepsilon$ is employed instead of Conditions~2 and 3, then it
  is easy to see that if $(R,R_l)$ is an achievable key-leakage rate
  pair, $(R+r,R_l+r)$ is also achievable, for any $r\geq 0$, by simply
  transmitting the additional key information (of rate $r$) through
  the public channel. Separating the two conditions as suggested
  avoids such artificial consequence of the combined condition.}.  For
the cases in which the alphabet of $X$ is not finite, we also impose
the following power constraint to the symbol sequence $X^n$ sent out
by the source:
\begin{equation} \label{eqn:powerconstr}
\frac{1}{n} \sum_{j=1}^{n} |X_j|^2 \leq P
\end{equation}
with probability one (w.p.1) for sufficiently large $n$.  We note that
the idea of key-leakage rate pair is similar to that of the
secrecy-equivocation rate pair originally defined in~\cite{Wyner1975}.

The \emph{$R_l$-relaxed key capacity} is defined as the maximum value
of $R$ such that $(R,R_l)$ is an achievable key-leakage rate pair. The
main reason for us to introduce the notion of relaxed key capacity is
to employ it as a gauge to measure the performance of practical codes
that will be presented in \autoref{sec:wz}. Since these codes have
finite block lengths and are to be decoded by the belief propagation
(BP) algorithm, they do not achieve zero leakage rate. Thus using the
relaxed key capacity provides a more suitable comparison than using
the original ``straight'' key capacity in~\cite{Ahlswede1993}. Also,
since these practical codes do not give zero leakage rate, their use
could be considered as an information-reconciliation step. The secrecy
performance could be further improved by additional privacy
amplification.
 
For wiretap channels that satisfy the aforementioned conditional
independence requirement, we have the following result, whose proof is
sketched in Appendix~\ref{app:pf_keycapgen}:
\begin{theorem}\label{thm:keycapgen}
  The $R_l$-relaxed key capacity of the memoryless wiretap channel
  $(X,Y,Z)$ with conditional pdf $p(y,z|x)=p(y|x)p(z|x)$ is given by
\[
C_K(R_l) = \max_{X: E[|X|^2] \leq P} \left[
  \min \{I(X;Y) - I(Y;Z) + R_l, I(X;Y)\} \right].
\]
\end{theorem}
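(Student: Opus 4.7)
The plan is to prove the theorem by establishing achievability and a matching converse, following the spirit of the Ahlswede--Csisz\'ar and Maurer channel-model key-agreement arguments but tracking the explicit leakage parameter $R_l$. For both halves I fix an input distribution $p_X$ with $E[|X|^2]\leq P$; using $p_X$ i.i.d.\ over the $n$ wiretap-channel instants produces $(X^n,Y^n,Z^n)$ i.i.d.\ from $p(x)p(y|x)p(z|x)$, and the power constraint~\eqref{eqn:powerconstr} is met w.p.1 by the law of large numbers.

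For achievability I would use a three-step strategy: the source transmits i.i.d.\ $X^n$ across the wiretap channel; the destination performs Slepian--Wolf reconciliation over the public channel at rate $H(Y|X)+\epsilon$ so that the source recovers $Y^n$ with vanishing error; and the source and destination each apply the same randomly chosen two-universal hash of output length $\lfloor nR\rfloor$ to $Y^n$ to form $K$ and $L$. By the leftover hash lemma and the conditional independence $Y$--$X$--$Z$, the wiretapper's view of $Y^n$ conditional on $(Z^n,\Phi^t,\Psi^t)$ has min-entropy rate essentially $H(Y|Z)-H(Y|X)=I(X;Y)-I(Y;Z)$, so any $R\leq I(X;Y)-I(Y;Z)$ is achievable with near-zero leakage (Conditions~2 and 3 with $R_l=0$). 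For $R$ beyond this threshold but no larger than $\min\{I(X;Y),I(X;Y)-I(Y;Z)+R_l\}$, I would extract a longer hash; the surplus bits contribute at most $R-(I(X;Y)-I(Y;Z))\leq R_l$ to the leakage rate, while $R\leq I(X;Y)$ ensures we are not asking for more shared randomness than $Y^n$ carries about $X^n$. Maximizing over admissible $p_X$ shows $C_K(R_l)$ is achievable.

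For the converse I would derive two upper bounds on $\tfrac{1}{n}H(K)$ for any permissible strategy satisfying Conditions~1--5. The first, $R\leq I(X;Y)$, follows from Fano's inequality (using Condition~1) to replace $K$ by $L$, standard manipulations that drop the public messages (whose role is merely to coordinate), and single-letterization of $I(X^n;Y^n)/n$ via a time-sharing variable while respecting the power constraint. The second bound uses $H(K)=I(K;Z^n,\Phi^t,\Psi^t)+H(K|Z^n,\Phi^t,\Psi^t)$: Conditions~2 and 3 bound the first term by $n(R_l+2\varepsilon)$, while the equivocation is bounded by Fano together with a Slepian--Wolf-type lower bound showing that the public channel must carry reconciliation information of rate at least $H(Y|X)$, yielding the single-letter bound $n(I(X;Y)-I(Y;Z))+o(n)$. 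The main obstacle is this equivocation step, because the interactive protocol allows $\Phi_i,\Psi_i$ and $X_j$ to depend on prior public and channel history, so memoryless-channel manipulations cannot be applied directly; I would resolve this by an induction on the time index combined with a Csisz\'ar-sum-identity argument that exploits, conditionally on the history, the Markov chain $Y_j$--$X_j$--$Z_j$. It is exactly this conditional independence that permits replacing $I(X;Z)$ by the smaller $I(Y;Z)$ in the final single-letter expression. Maximizing over $p_X$ then matches the achievability bound and yields $C_K(R_l)$.
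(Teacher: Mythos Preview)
Your plan is close to the paper's, but two steps do not go through as written.

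\textbf{Achievability.} The theorem is stated for channels with conditional \emph{pdf} $p(y,z|x)$, so $Y$ may be continuous (it is Gaussian in the paper's application). Slepian--Wolf binning at rate $H(Y|X)+\epsilon$ and leftover-hash extraction, as you describe them, presuppose a discrete $Y$; for continuous $Y$ the quantities $H(Y|X)$ and $H(Y|Z)$ are differential entropies and the argument collapses. The paper handles this by random Wyner--Ziv coding: the destination quantizes $Y^n$ to $\hat Y^n$ through an auxiliary test channel $p(\hat y|y)$, the bounds are proved for the discrete $\hat Y$, and the target rate is approached by choosing the auxiliary appropriately. Your scheme can be repaired the same way, but the quantization step is not optional. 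A smaller point: when $R>I(X;Y)-I(Y;Z)$ the leftover-hash bound is vacuous, so ``extract a longer hash'' does not by itself control Condition~3; one really needs a two-part construction (a hashed part of rate $I(X;Y)-I(Y;Z)$ that is secure, plus an additional part of rate at most $R_l$ that may leak), which is effectively what the paper's Wyner--Ziv scheme does by adjusting its bin-rate parameters according to whether $R_l<I(Y;Z)$.

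\textbf{Converse.} Your decomposition $H(K)=I(K;Z^n,\Phi^t,\Psi^t)+H(K|Z^n,\Phi^t,\Psi^t)$ is fine, and Conditions~2 and 3 do bound the first term. The gap is the equivocation step: there is no ``Slepian--Wolf-type lower bound showing that the public channel must carry reconciliation information of rate at least $H(Y|X)$''. A permissible strategy is not obligated to have the source reconstruct $Y^n$, and the public exchange may be interactive and arbitrarily structured, so no such rate floor exists. What \emph{does} bound the equivocation is precisely the induction you mention afterward: Fano gives $H(K|Z^n,\Phi^t,\Psi^t)\leq I(K;L|Z^n,\Phi^t,\Psi^t)+o(n)$, and $I(K;L|Z^n,\Phi^t,\Psi^t)\leq\sum_j I(X_j;Y_j|Z_j)$ is the Ahlswede--Csisz\'ar time-indexed induction. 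The same lemma without $Z^n$ is also what underlies your first bound $R\leq I(X;Y)$; ``dropping the public messages'' is not a standard manipulation but exactly this inequality for $I(K;L|\Phi^t,\Psi^t)$. The paper follows this route verbatim---citing the two Ahlswede--Csisz\'ar bounds, single-letterizing via a time-sharing variable $Q$, and then using $p(y,z|x)=p(y|x)p(z|x)$ to rewrite $I(X;Y|Z)=I(X;Y)-I(Y;Z)$. Once the spurious Slepian--Wolf interlude is removed, your converse \emph{is} the paper's converse.
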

We employ this result to calculate the relaxed key capacities of the
BPSK-constrained Gaussian wiretap channel in the next section.

\section{BPSK-constrained Gaussian Wiretap Channel}\label{sec:ch_mod}

Hereafter, we focus on the Gaussian wiretap channel, in which the
source-to-destination channel and source-to-wiretapper channel are
both additive white Gaussian noise (AWGN) channels. We restrict the
source to transmit only BPSK symbols. More specifically, let $X_i \in
\{\pm 1\}$ be the $i$th transmit symbol from the source\footnote{In
  later sections, whenever appropriate, we implicitly employ the
  mapping $+1 \rightarrow 0$ and $-1 \rightarrow 1$, where $0$ and $1$
  are the two usual elements in GF(2).}, and let $Y_i$ and $Z_i$ be
the corresponding received symbols at the destination and wiretapper,
respectively. The Gaussian wiretap channel can then be modeled as
\begin{equation}\label{eqn:channel}
\begin{split}
Y_i &= \beta X_i + N_i \\
Z_i &= \alpha \beta X_i + \tilde N_i,
\end{split}
\end{equation}
where $N_i$ and $\tilde N_i$ are i.i.d. zero-mean Gaussian random
variables of variance $\sigma^2$. Note that $\beta$ is the gain of the
BPSK symbols transmitted by the source. By the source power
constraint~\eqref{eqn:powerconstr}, we have $\beta^2 \leq P$. Also,
$\alpha$ is a positive constant that models the gain advantage of the
wiretapper over the destination. Let the normalized gain $\tilde \beta
= \beta / \sigma$. Then, the received signal-to-noise ratios (SNRs) at
the destination and wiretapper are $\tilde \beta^2$ and $\alpha^2
\tilde \beta^2$, respectively. Clearly, the Gaussian wiretap channel
satisfies the memoryless and conditional independent properties
required in Theorem~\ref{thm:keycapgen}.  Specializing
Theorem~\ref{thm:keycapgen} to the BPSK-constrained Gaussian wiretap
channel, it is not hard to show\footnote{The proofs of
  \eqref{eqn:relaxkeycap_bpsk} and \eqref{eqn:relaxkeycap_bpsk_constr}
  can be easily, though rather tediously, established by checking the
  concavity and symmetry of $I(X;Y)-I(Y;Z)$ as a function of the
  binary source distribution in the respective cases.} that the
$R_l$-relaxed key capacity is given by
\begin{eqnarray}
  C_b(R_l) &=& 
  \max_{0 \leq \tilde \beta \leq \sqrt{\frac{P}{\sigma^2}}} \Bigg\{ \min\bigg\{
  \frac{1}{2\pi}\int_{0}^{\infty}\!\!\!\!\int_{0}^{\infty} H_2 \left(
    \frac{1+e^{-2\tilde \beta y}\cdot e^{-2\alpha\tilde \beta
        z}}{[1+e^{-2\tilde \beta
        y}][1+e^{-2\alpha\tilde \beta z}]}\right) \left[1+e^{-2\tilde \beta y}\right]
  \left[1+e^{-2\alpha\tilde \beta z}\right] \nonumber \\
  & & ~~~~~~~~~~~~~~~~~~~~~~~~~~~~~~~~~~\cdot
  \exp\left[-\frac{(y-\tilde \beta)^2}{2}-\frac{(z-\alpha\tilde
      \beta)^2}{2}\right]dydz +R_l,1 \bigg\}
  \nonumber \\
  & & ~~~~~~~~~~~~~~~~
  - \frac{1}{\sqrt{2\pi}}
  \int_{0}^{\infty} H_2 \left(\frac{1}{1+e^{-2\tilde \beta y}}\right)
  \left(1+e^{-2\tilde \beta y}\right)
  \exp\left[-\frac{(y-\tilde \beta)^2}{2}\right] dy \Bigg\}
\label{eqn:relaxkeycap_bpsk}
\end{eqnarray}
where $H_2(p) = -p\log_2 p -(1-p) \log_2 (1-p)$ is the binary entropy
function. We note that $C_b(R_l)$ is achieved when $X_i$ is
equiprobable; but it is not necessarily achieved by transmitting at
the maximum allowable power $P$.

The achievability proof of Theorem~\ref{thm:keycapgen}
(cf. Appendix~\ref{app:pf_keycapgen}) employs random Wyner-Ziv coding,
in which the received symbols at the destination need to be quantized
due to the fact that the channel alphabet at the destination in the
Gaussian wiretap channel is continuously distributed. In this paper,
we consider a simple symbol-by-symbol hard-decision quantization
scheme in which the $i$th quantized destination symbol $\tilde Y_i =
\mbox{sgn}(Y_i)$, where $\mbox{sgn}$ is the signum function. Note that
this quantization is suboptimal and leads to a loss in key
capacity. We quantify this loss by applying
Theorem~\ref{thm:keycapgen} to the BPSK-constrained Gaussian wiretap
channel with hard-decision quantization at the destination to
calculate the relaxed-$R_l$ key capacity $C_{bq}(R_l)$. Using the
standard notation
$Q(x)=\int_{x}^{\infty}\frac{e^{-u^2/2}}{\sqrt{2\pi}}du$, it is not
hard to establish\footnotemark[4] that
\begin{equation}
C_{bq}(R_l) = 
\max_{0 \leq \tilde \beta \leq \sqrt{\frac{P}{\sigma^2}}} \left[
  \min\{C_s(\tilde\beta) - C_w(\tilde\beta) +R_l, C_s(\tilde\beta)\} \right],
\label{eqn:relaxkeycap_bpsk_constr}
\end{equation}
where
\begin{eqnarray}
C_s(\tilde\beta) &=& 1 -  H_2(Q(\tilde \beta)) 
\label{eqn:Cs} \\
C_w(\tilde\beta) &=&
1 - \frac{1}{\sqrt{2\pi}} \int_0^{\infty} H_2\left(
\frac{Q(\tilde \beta)+[1-Q(\tilde \beta)]e^{-2\alpha \tilde \beta
z}}{1+e^{-2 \alpha \tilde \beta z}}\right)[1+e^{-2 \alpha \tilde
\beta z}] e^{-\frac{(z-\alpha\tilde \beta)^2}{2}}dz.
\label{eqn:Cw}
\end{eqnarray}
are respectively the capacities of the quantized-destination-to-source
and quantized-destination-to-wiretapper channels at the normalized
gain $\tilde\beta$. Like before, $C_{bq}(R_l)$ is achieved when $X_i$
is equiprobable; but it is not necessarily achieved by transmitting at
the maximum allowable power $P$.  To visualize the loss in key
capacity, \autoref{fig:ck_p} shows $C_b(R_l)$ and $C_{bq}(R_l)$ versus
the maximum allowable SNR ($P/\sigma^2$) for different values of
$R_l$.  We can see that the loss in key capacity due to the
hard-decision quantization is no more than $0.07$ bits per (wiretap)
channel use for the cases shown.

\section{Secret sharing scheme employing regular LDPC code
  ensembles}\label{sec:ssldpc}

The achievability proof of Theorem~\ref{thm:keycapgen} in
Appendix~\ref{app:pf_keycapgen} employs a secret sharing scheme with
random Wyner-Ziv coding.  For the BPSK-constrained Gaussian wiretap
channel with destination hard-decision quantization, we show in this
section that a secret sharing scheme that employs a properly
constructed ensemble of regular LDPC codes can also asymptotically
achieve the $R_l$-relaxed key capacity. We will design practical
secret sharing schemes for the BPSK-constrained Gaussian wiretap
channel in Section~\ref{sec:wz} based on the LDPC coding structure
proposed here.

To start describing the proposed secret sharing scheme, let us
consider an $(n,l)$ binary linear block code $\mathcal{C}$ with $2^l$
distinct codewords of length $n$ and an $(l-k)$-dimensional subspace
$\mathcal{W}$ in $\mathcal{C}$. The pair $(\mathcal{C},\mathcal{W})$
defines what we call an $(n,l,k)$ \emph{secret sharing binary linear
  block code}. Given any such $(\mathcal{C},\mathcal{W})$ pair, let
$\mathcal{K}$ be the quotient of $\mathcal{C}$ by $\mathcal{W}$. Then
$\mathcal{K}$ is a linear space of $2^k$ distinct cosets of the form
$\hat x^n+\mathcal{W}$, where $\hat x^n \in \mathcal{C}$.  We will use
the coset index in $\mathcal{K}$ as the secret key. We will see later
that the ordering of the cosets in $\mathcal{K}$ is immaterial. The
ratios $R_c=\frac{l}{n}$ and $R_k=\frac{k}{n}$ will be referred to as
the \emph{code rate} and \emph{key rate} of the $(n,l,k)$
secret sharing binary linear block code, respectively.

Next, we consider the following random ensemble of $(n,l,k)$
secret sharing binary linear block codes:
\begin{itemize}
\item The $(n,l)$ linear block code $\mathcal{C}$ is chosen uniformly
  from the ensemble of $(d_v,d_c)$-regular LDPC codes considered in
  \cite{RichardsonIT01_1}.  That is, we consider that $\mathcal{C}$ is
  chosen uniformly from the set of all bipartite graphs
  \cite{tanner1981ral} with $n$ degree-$d_v$ variable nodes and $n-l$
  degree-$d_c$ check nodes.
\item The subspace $\mathcal{W}$ is chosen uniformly over the set of
  all possible $(l-k)$-dimensional subspaces in $\mathcal{C}$.
\end{itemize}
Note that a realization of the randomly chosen $\mathcal{C}$ may
actually have $2^{l'}$ distinct codewords, where $l'>l$. In such case,
$\mathcal{K}$ will be of dimension $k+l'-l$; so the actual key rate
will be larger than $R_k$. Hence, we can conservatively assume
$\mathcal{C}$ is always an $(n,l)$ linear code with $2^l$ distinct
codewords to simplify the notation below.

Consider the following secret sharing scheme:
\begin{enumerate}
\item {\bf Random source transmission and destination quantization:}
  The source randomly generates a sequence $X^n$ of $n$ i.i.d. equally
  likely BPSK symbols and transmits them consecutively over the
  Gaussian wiretap channel $(X,Y,Z)$. The destination receives the
  sequence $Y^n$ and obtains the quantized sequence $\tilde Y^n$ by
  performing symbol-by-symbol hard-decision quantization on $Y^n$,
  i.e., $\tilde Y_j=\mbox{sgn}(Y_j)$. This quantization effectively
  turns the source-to-destination channel into a BSC, whose cross-over
  probability depends on the SNR of the original source-to-destination
  channel. We note that the wiretapper also observes $Z^n$ through the
  source-to-wiretapper channel.

\item {\bf Syndrome generation through LDPC encoding at destination:}
  The next step is for the destination to feed a compressed version of
  $\tilde Y^n$ back to the source through the public channel so that
  the source can resolve the differences between $X^n$ and $\tilde
  Y^n$.  This is similar to the problem of compressing an equiprobable
  memoryless binary source with side information using LDPC codes
  considered in~\cite{LiverisLDPC02}.  More precisely, the destination
  selects $(\mathcal{C},\mathcal{W})$ randomly from the ensemble of
  secret sharing $(d_v,d_c)$-regular LDPC codes described above.  It
  then generates the syndrome sequence $S^{n-l}=\tilde Y^n H^T$, where
  $H$ is a parity check matrix of $\mathcal{C}$. We note that each
  $S^{n-l}$ uniquely corresponds to a coset $E_S^n +
  \mathcal{C}$. Further, the destination determines which coset in
  $\mathcal{K}$ that $X_0^n=\tilde Y^n + E_S^n \in \mathcal{C}$
  belongs. Denote that coset by $\hat X^n_0 + \mathcal{W}$.  Finally,
  the destination sends $E_S^{n}$, $\mathcal{C}$, and $\mathcal{W}$
  back to the source via the public channel.

\item {\bf Decoding at source:} The source then tries to decode for
  $X_0^n$ from observing $X^n$ and $E_S^{n}$ according to
  $(\mathcal{C},\mathcal{W})$. Treating $X^n + E_S^n$ as a noisy
  version of $X^n_0$, it performs maximum likelihood (ML) decoding to
  obtain a codeword in $\mathcal{C}$ and then determines which coset
  in $\mathcal{K}$ that the decoded codeword belongs. Denote that
  coset by $\hat X^n +\mathcal{W}$.

\item {\bf Key generation at source and destination:} The destination
  sets its key $L$ to be index of $\hat X^n_0+ \mathcal{W}$ in
  $\mathcal{K}$.  Similarly, the source sets its key $K$ to be the
  index of $\hat X^n+ \mathcal{W}$ in $\mathcal{K}$.
\end{enumerate}
It is clear that this secret sharing scheme is permissible. Indeed,
under the notation of \autoref{sec:relax_keycap}, for the proposed
secret sharing scheme, $t=n+1$, $i_j=j$ for $j=1,2,\ldots,n$, $M_X =
X^n$, $M_Y=(\mathcal{C},\mathcal{W})$, and
$\Psi_{n+1}=(E_S^{n},\mathcal{C},\mathcal{W})$ is the only message
sent via the public channel. Hence, we can evaluate the secrecy
performance of the scheme in the context of its achievable key rate
defined in \autoref{sec:relax_keycap} as follows.

First, based on the linearity of LDPC codes, the memoryless nature of
the Gaussian wiretap channel, the chosen distribution of $X^n$, and
the symbol-by-symbol hard decision performed to obtain $\tilde Y^n$ at
the destination, it is easy to check that $H(\tilde Y^n)=n$,
$H(E_S^{n}|\mathcal{C},\mathcal{W})=n-l$,
$H(L|\mathcal{C},\mathcal{W})=k$, and
$I(L;E_S^{n}|\mathcal{C},\mathcal{W})=0$. Then,
\begin{equation*}
0 \leq I(L;E_S^{n},\mathcal{C},\mathcal{W}) =
I(L;\mathcal{C},\mathcal{W}) 
= H(L) - H(L|\mathcal{C},\mathcal{W}) \leq k - k = 0.
\end{equation*}
Hence, $I(L;E_S^{n},\mathcal{C},\mathcal{W}) = 0$,
$I(L;\mathcal{C},\mathcal{W})=0$, and $H(L)=k$. If the decoding
process at the source achieves the ensemble average error probability
$\bar \epsilon_s$, then we have $\Pr\{K \neq L\} \leq \bar \epsilon_s
$. Thus, $H(K|L) \leq 1+k \bar \epsilon_s$ and $H(L|K) \leq 1+k \bar
\epsilon_s$ by Fano's inequality. That in turn implies
\begin{eqnarray*}
  \frac{1}{n}I(K;E_S^{n},\mathcal{C},\mathcal{W})
  &=& \frac{1}{n}\left[I(L;E_S^{n},\mathcal{C},\mathcal{W}) 
    + I(K;E_S^{n},\mathcal{C},\mathcal{W}|L) 
    - I(L;E_S^{n},\mathcal{C},\mathcal{W}|K)\right] \\
  &\leq& 
  \frac{1}{n}I(K;E_S^{n},\mathcal{C},\mathcal{W}|L) \leq \frac{1}{n}H(K|L) 
  \leq R_k\bar \epsilon_s+\frac{1}{n}
\end{eqnarray*}
and
\begin{equation}
  \frac{1}{n}H(K)
  = \frac{1}{n}\left[H(L) + H(K|L) - H(L|K)\right]
  \geq R_k - R_k\bar \epsilon_s - \frac{1}{n}.
\label{eqn:C4}
\end{equation}
Hence, Conditions 2 and 5 in \autoref{sec:relax_keycap} are
satisfied when $n$ is sufficiently large if $\bar \epsilon_s$ can be
made arbitrarily small.  Similarly,
\begin{eqnarray} \label{eqn:IKZS}
\lefteqn{I(K;Z^n,E_S^{n},\mathcal{C},\mathcal{W})}\nonumber \\
  &=& I(L;Z^n,E_S^{n},\mathcal{C},\mathcal{W}) 
  + I(K;Z^n,E_S^{n},\mathcal{C},\mathcal{W}|L) 
  - I(L;Z^n,E_S^{n},\mathcal{C},\mathcal{W}|K) \nonumber \\
  &\leq& 
  I(L;Z^n,E_S^{n},\mathcal{C},\mathcal{W}) 
  + I(K;Z^n,E_S^{n},\mathcal{C},\mathcal{W}|L) \nonumber \\
  &\leq& 
  I(L;Z^n,E_S^{n},\mathcal{C},\mathcal{W}) + H(K|L) \nonumber \\
  &\leq& 
  I(L;Z^n,E_S^{n},\mathcal{C},\mathcal{W}) + k\bar \epsilon_s + 1  \nonumber \\
  &=& I(L;Z^n,E_S^{n}|\mathcal{C},\mathcal{W}) + k\bar \epsilon_s + 1,
\end{eqnarray}
where the last line is due to the fact that
$I(L;\mathcal{C},\mathcal{W}) =0$. Here, 
\begin{eqnarray} \label{eqn:ILZS}
  \lefteqn{I(L;Z^n,E_S^{n}|\mathcal{C},\mathcal{W})} \nonumber \\
  &=& 
  H(L|\mathcal{C},\mathcal{W}) + H(E_S^{n}|Z^n,\mathcal{C},\mathcal{W}) 
  - H(L,E_S^{n}|Z^n,\mathcal{C},\mathcal{W}) \nonumber \\
  &=& 
  H(L|\mathcal{C},\mathcal{W}) + H(E_S^{n}|Z^n,\mathcal{C},\mathcal{W}) 
  + H(\tilde Y^n|Z^n,L,E_S^{n},\mathcal{C},\mathcal{W})
  - H(L,E_S^{n},\tilde Y^n|Z^n,\mathcal{C},\mathcal{W}) \nonumber \\
  &\leq& 
  H(L|\mathcal{C},\mathcal{W}) + H(E_S^{n}|\mathcal{C},\mathcal{W}) 
  + H(\tilde Y^n|Z^n,L,E_S^{n}) - H(\tilde Y^n|Z^n,\mathcal{C},\mathcal{W}) \nonumber \\
  &=& 
  H(L|\mathcal{C},\mathcal{W}) + H(E_S^{n}|\mathcal{C},\mathcal{W}) 
  + H(\tilde Y^n|Z^n,L,E_S^{n}) - H(\tilde Y^n) + I(\tilde Y^n;Z^n),
\end{eqnarray}
where 
the last equality follows from the fact that $(\tilde Y^n,Z^n)$ is
independent of $(\mathcal{C},\mathcal{W})$.  Also $I(\tilde Y^n;Z^n) =
nI(\tilde Y;Z) = nC_w(\tilde\beta)$ because of the memoryless nature
of the channel from $\tilde Y^n$ and $Z^n$ and of the fact that the
$\Pr(\tilde Y=+1)=\Pr(\tilde Y=-1)=0.5$ achieves the capacity of this
channel.  Moreover, consider a fictitious receiver at wiretapper
trying to decode for $\tilde Y^n$ from observing $Z^n$, $E_S^{n}$, and
$\hat X^n_0$ (or $L$ equivalently). Suppose that the ensemble average
error probability achieved by this receiver, employing ML decoding, is
$\bar \epsilon_w$. Then we have $H(\tilde Y^n|Z^n,L,E_S^{n}) \leq
1+(l-k)\bar \epsilon_w$ again by Fano's inequality. Putting all these
and~\eqref{eqn:ILZS} back into~\eqref{eqn:IKZS}, we obtain
\begin{eqnarray} \label{eqn:IKZgSf}
\frac{1}{n}I(K;Z^n | E_S^{n},\mathcal{C},\mathcal{W})
&\leq&
\frac{1}{n}I(K;Z^n,E_S^{n},\mathcal{C},\mathcal{W}) \nonumber \\
&\leq& 
C_w(\tilde\beta) - (R_c-R_k) + R_k\bar \epsilon_s + (R_c-R_k)\bar \epsilon_w +
  \frac{2}{n}.
\end{eqnarray}

The preceding secrecy analysis of the proposed secret sharing scheme
based on the secret sharing regular LDPC code ensembles allow us to
arrive at the following result:
\begin{theorem}\label{thm:relax_keyach}
  Fix $\tilde\beta>0$.  Suppose that $C_w(\tilde\beta) \leq R_c \leq
  C_s(\tilde\beta)$. For any $R_l \geq 0$, choose $R_k =
  \min\{R_c-C_w(\tilde\beta)+R_l, R_c\}$. Then $(R_k,R_l)$ is an
  achievable key-leakage rate pair through the BPSK-constrained
  Gaussian wiretap channel with symbol-by-symbol hard-decision
  destination quantization. Moreover, this rate pair can be achieved
  by the aforementioned secret sharing scheme using the secret sharing 
  $(d_v,d_c)$-regular LDPC code ensemble described before when $n$
  increases.
\end{theorem}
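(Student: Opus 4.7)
The plan is to leverage the five-condition analysis already carried out in the excerpt, which has reduced everything to two ensemble-averaged error probabilities: $\bar\epsilon_s$ for the source's ML decoding of $X_0^n$ from $(X^n, E_S^n, \mathcal{C}, \mathcal{W})$, and $\bar\epsilon_w$ for the fictitious wiretapper's ML decoding of $\tilde Y^n$ from $(Z^n, E_S^n, L, \mathcal{C}, \mathcal{W})$. My first move is to substitute the prescribed $R_k = \min\{R_c - C_w(\tilde\beta) + R_l, R_c\}$ into the leakage bound~\eqref{eqn:IKZgSf} and check the arithmetic in both branches. When $R_l \leq C_w(\tilde\beta)$, one has $R_c - R_k = C_w(\tilde\beta) - R_l$ and the bound collapses to $R_l + R_k\bar\epsilon_s + (C_w(\tilde\beta) - R_l)\bar\epsilon_w + 2/n$; when $R_l > C_w(\tilde\beta)$, the bound is $C_w(\tilde\beta) + R_k\bar\epsilon_s + 2/n < R_l + R_k\bar\epsilon_s + 2/n$. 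In both cases Condition~3 is satisfied for arbitrary $\varepsilon>0$ once $\bar\epsilon_s,\bar\epsilon_w,1/n$ are small enough, and Conditions 1, 2, 4, 5 follow directly from the already-derived inequalities (notably \eqref{eqn:C4}) driven by $\bar\epsilon_s\to 0$.

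The remaining work is therefore to exhibit, for each $\tilde\beta$ with $C_w(\tilde\beta)\le R_c \le C_s(\tilde\beta)$, a sequence of $(d_v,d_c)$-regular LDPC ensembles of rate $R_c$ on which \emph{both} decoders simultaneously have vanishing error. For $\bar\epsilon_s$, I would invoke the Slepian--Wolf-with-LDPC result of the type used in~\cite{LiverisLDPC02}: the hard-decision quantization turns the source-to-destination link into a BSC of crossover $Q(\tilde\beta)$, and decoding $X_0^n$ from $X^n+E_S^n$ is the standard syndrome-based decoding of i.i.d.\ $\tilde Y^n$ with i.i.d.\ uniform side information. Since $R_c \leq C_s(\tilde\beta) = 1 - H_2(Q(\tilde\beta))$, a Gallager-style second-moment/random-coding argument specialized to the $(d_v,d_c)$-regular ensemble gives $\bar\epsilon_s \to 0$ provided $d_v$ and $d_c$ are chosen large enough (with $d_v/d_c = 1 - R_c$).

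The main obstacle is controlling $\bar\epsilon_w$. Here the fictitious receiver knows that $\tilde Y^n = X_0^n + E_S^n$ with $X_0^n$ in a specified coset $\hat X_0^n + \mathcal{W}$ of the random $(l-k)$-dimensional subspace $\mathcal{W}\subset \mathcal{C}$, and it observes $Z^n$ through the memoryless BISO channel $\tilde Y \to Z$ of capacity $C_w(\tilde\beta)$. Its task is therefore channel decoding over a ``code'' of effective rate $(l-k)/n = R_c - R_k$, which by the chosen $R_k$ is at most $C_w(\tilde\beta)$, and strictly less whenever $R_l > 0$. My plan is to take a double expectation: first over the uniform choice of $\mathcal{W}$ inside a fixed $\mathcal{C}$, then over the LDPC graph. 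Conditional on $\mathcal{C}$, a uniformly random $(l-k)$-dimensional subspace of $\mathcal{C}$ shifted by $\hat X_0^n + E_S^n$ has the same first-order pairwise-distance statistics as a uniformly random linear code of that dimension in $\mathbb{F}_2^n$, so a Gallager random linear code bound on the BISO channel $\tilde Y\to Z$ produces an average block error probability decaying like $2^{-nE_r(R_c - R_k)}$ with positive random-coding exponent in the interior $R_c - R_k < C_w(\tilde\beta)$. The boundary case $R_l = 0$ is absorbed by the standard $\varepsilon$-perturbation built into the definition of achievability in Section~\ref{sec:relax_keycap}. This is essentially the argument used in~\cite{Muramatsu2006} for correlated discrete sources, adapted here to the quantized Gaussian wiretap channel.

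Combining the two error bounds, for every $\varepsilon>0$ I can choose $(d_v,d_c)$ and $n$ large enough that $\bar\epsilon_s,\bar\epsilon_w<\varepsilon$ simultaneously. Substituting into the assembled inequalities of Step~1 then verifies all five conditions in Section~\ref{sec:relax_keycap}, establishing that $(R_k,R_l)$ is achievable by the proposed scheme. The hardest step is the second one: showing that random cosets within a random \emph{regular LDPC} code (as opposed to a fully random linear code) still enjoy the random-coding exponent on a general BISO channel, rather than only on simpler channels such as the BEC.
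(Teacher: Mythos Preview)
Your reduction to the two ensemble error probabilities $\bar\epsilon_s$ and $\bar\epsilon_w$, and the subsequent arithmetic on the leakage bound~\eqref{eqn:IKZgSf}, is correct and matches the paper. The gap is in your treatment of $\bar\epsilon_w$.

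The claim that, conditional on $\mathcal{C}$, a uniformly random $(l-k)$-dimensional subspace $\mathcal{W}\subset\mathcal{C}$ ``has the same first-order pairwise-distance statistics as a uniformly random linear code of that dimension in $\mathbb{F}_2^n$'' is false. Every codeword of $\mathcal{W}$ is a codeword of $\mathcal{C}$, so the weight enumerator of $\mathcal{W}$ is bounded pointwise by that of $\mathcal{C}$, and the regular-LDPC spectrum $\bar S_m$ is far from binomial (in particular it has many more low-weight codewords than a random linear code of the same rate). Hence you cannot invoke a straight Gallager random-coding bound and obtain $2^{-nE_r(R_c-R_k)}$; the low-weight portion of the spectrum would swamp that argument. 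Your own final sentence identifies exactly this difficulty, but the mechanism you propose in the paragraph above it does not resolve it.

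What the paper actually does is two things. First, it proves (Lemma~\ref{thm:weightspectra}) that averaging over the uniform choice of $\mathcal{W}$ inside $\mathcal{C}$ gives $\bar T_m = \frac{2^{l-k}-1}{2^l-1}\,\bar S_m \le 2^{-k}\bar S_m$: the subspace inherits the \emph{LDPC} spectrum, scaled down by $2^{-k}$, not a binomial one. Second, it feeds this into the Miller--Burshtein combined union/Shulman--Feder bound~\cite{Miller2001}: a union bound handles the low-weight region (where the LDPC spectrum is non-binomial but the Bhattacharyya factor $D_w^m$ kills the contribution), and the Shulman--Feder bound handles the bulk, controlling the ratio $\bar T_m\big/\big[(2^{l-k}-1)\binom{n}{m}2^{-n}\big]$ by an $\alpha_w$ that can be made $o(1)$ in the exponent by choosing $d_v,d_c$ large. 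This is what delivers a positive exponent at the effective rate $R_c-R_k$ on the continuous-output BISO channel $\tilde Y\to Z$. Your reference to~\cite{Muramatsu2006} does not help here either: that paper uses a method-of-types analysis (via~\cite{Bennatan2004}) which requires a discrete wiretapper alphabet, whereas $Z$ is continuous; the paper explicitly notes this and adopts the Miller--Burshtein route for precisely that reason.
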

\begin{proof}
  First, suppose that $R_c < C_s(\tilde\beta)$ and $R_l>0$.  Since
  $R_c \geq C_w(\tilde\beta)$, $R_k >0$.  Then $R_c-R_k =
  \max\{C_w(\tilde\beta)-R_l,0\} < C_w(\tilde\beta)$.  Thus, by
  \eqref{eqn:IKZgSf}, if we can show that there is a pair $(d_v,d_c)$
  such that $R_c=1-\frac{d_v}{d_c}$, and both $\bar\epsilon_s$ and
  $\bar\epsilon_w$ in the preceding discussion vanish as $n$
  increases, then Condition 3 in Section~\ref{sec:relax_keycap} will
  be satisfied when $n$ is sufficiently large.  From the preceding
  discussion, Conditions 1, 2, and 5 will also be satisfied.
  Comparing \eqref{eqn:C4} and Condition 4, we see then that
  $(R_k,R_l)$ will be an achievable key-leakage pair. The existence of
  such pair $(d_v,d_c)$ results from the following lemma, whose proof
  is an adaptation of the arguments in \cite[Theorem 3]{Miller2001} to
  the proposed secret sharing $(d_v,d_c)$-regular LDPC code
  ensemble. The details are presented in Appendix~\ref{app:2}.
\begin{lemma}\label{lm:existldpc}
  Consider the ensemble average error probabilities $\bar \epsilon_w$
  and $\bar \epsilon_s$ achieved by the respective ML decoders at the
  source and wiretapper of the secret sharing $(d_v,d_c)$-regular LDPC
  code ensemble mentioned above. For any fixed $\tilde\beta>0$,
  suppose that $R_c < C_s(\tilde\beta)$ and $R_c - R_k <
  C_w(\tilde\beta)$. Then, there exists a choice of $(d_v,d_c)$ such
  that
  \begin{enumerate}
  \item $R_c = 1 - \frac{d_v}{d_c}$,
  \item $\bar \epsilon_w$ decreases exponentially (polynomially) with
    increasing $n$ for $R_k>0$ (for $R_k=0$), and
  \item $\bar \epsilon_s$ decreases polynomially with increasing $n$.
  \end{enumerate}
\end{lemma}
Finally, note that the before-imposed restrictions $R_c <
C_s(\tilde\beta)$ and $R_l>0$ can be removed since the key-leakage
rate region is closed.
\end{proof}
A comparison of Theorem~\ref{thm:relax_keyach} and
\eqref{eqn:relaxkeycap_bpsk_constr} shows that the restriction to the
secret sharing regular LDPC code ensemble described in this section
does not reduce the relaxed key capacity of the BPSK-constrained
Gaussian wiretap channel with destination hard-decision quantization.

As mentioned in \autoref{sec:intro}, a similar LDPC-based secret-key
agreement scheme employing observations of correlated discrete
stationary sources at the source, destination, and wiretapper was
studied in~\cite{Muramatsu2006}. After Step 1) of our proposed secret
sharing scheme, the observations $X^n$, $\tilde Y^n$, and $Z^n$ at the
three terminals can be viewed as generated from correlated sources;
thus reducing our model to the one considered
in~\cite{Muramatsu2006}\footnote{Our destination and source correspond
  to the sender and receiver in~\cite{Muramatsu2006},
  respectively. For convenience, we employ our terminology here when
  referring to the scheme in~\cite{Muramatsu2006}.}, except that the
wiretapper alphabet is continuous in our case.  As in our scheme, the
scheme in~\cite{Muramatsu2006} has the syndrome $S^{n-l}$ of $\tilde
Y^n$ sent to the source. On the other hand, the key
in~\cite{Muramatsu2006} is obtained by calculating the syndrome of
$\tilde Y^n$ with respect to another independently selected LDPC
code. The scheme in~\cite{Muramatsu2006} is shown to achieve key
capacity via a similar approach as ours. First, the consideration of
leakage information is converted to that of the error probabilities
achieved by decoders at the source and wiretapper by an upper bound
similar to~\eqref{eqn:IKZgSf} for a pair of fixed LDPC codes
(cf. Eqn.~\eqref{eqn:IKS}). Then, the existence of a fixed code pair
with vanishing error probabilities is shown via an ML decoding error
analysis of the code ensemble based on the method of
types~\cite{Bennatan2004}. Because of the continuous wiretapper
alphabet, the ML decoding error analysis in~\cite{Muramatsu2006} does
not directly apply to our case. Hence, we have opted for the combined
union and Shulman-Feder bounding technique in~\cite{Miller2001}, which
does however require the BISO nature of the channel from the
(quantized) destination to the wiretapper.  Obviously,
Lemma~\ref{lm:existldpc} also implies the existence of a fixed
$(\mathcal{C},\mathcal{W})$ from the secret sharing regular LDPC
ensemble with vanishing decoding errors in our design, and hence the
use of this fixed $(\mathcal{C},\mathcal{W})$ is also sufficient to
achieve the relaxed key capacity in our case.

Expressed in our notation, elements in the LDPC code ensemble of
\cite{Muramatsu2006} are also of the form $(\mathcal{C},\mathcal{W})$.
For our ensemble, $\mathcal{W}$ is (conditionally) uniformly
distributed over the set of all subspaces of a given
$\mathcal{C}$. For the ensemble of \cite{Muramatsu2006}, $\mathcal{W}$
is (conditionally) uniformly distributed over the set of subspaces of
$\mathcal{C}$ specified by the concatenation of the parity matrices of
$\mathcal{C}$ and another properly chosen regular LDPC code.  While
each element in the ensemble of \cite{Muramatsu2006} is also an
element of our ensemble, the two ensembles are different since the
respective (conditional) uniform distributions for $\mathcal{W}$ are
defined over two different sets of subspaces. In a sense, the ensemble
of \cite{Muramatsu2006} is more restrictive since $\mathcal{W}$ also
needs to be an LDPC code. The discussion in this section shows that
the LDPC structure needs to be imposed only on $\mathcal{C}$ but not
on $\mathcal{W}$. This bears significance in the design of practical
codes because the design based on one LDPC structure derived from our
ensemble is much simpler, as will be illustrated in the following
section.


\section{Secret sharing scheme employing practical LDPC codes}
\label{sec:wz}

In practice, it is not realistic to employ the secret sharing regular
LDPC code ensemble and ML decoding at the source as suggested in
\autoref{sec:ssldpc}, for even moderate values of $n$. In this
section, we investigate the secrecy performance of a secret sharing
scheme similar to the one suggested in \autoref{sec:ssldpc}, but with
fixed choices of $(\mathcal{C},\mathcal{W})$ from the secret sharing
regular LDPC code ensemble and more-practical BP decoding. In
addition, from the proof of Lemma~\ref{lm:existldpc} in
Appendix~\ref{app:2}, the values of $d_v$ and $d_c$ need to be large
in order for the ensemble average error probabilities $\bar
\epsilon_w$ and $\bar \epsilon_s$ to decrease with $n$, and hence to
achieve the relaxed key capacity. As large values of $d_v$ and $d_c$
increase the graph complexity of a LDPC code, and hence the complexity
of BP decoding, we have to limit ourselves to small values of $d_v$
and $d_c$.  To alleviate the shortcoming of regular LDPC codes with
small $d_v$ and $d_c$, we also consider the use of more-efficient
irregular LDPC codes in the proposed secret sharing scheme.

We consider the secret sharing scheme described in
\autoref{sec:ssldpc}, except that the secret sharing code
$(\mathcal{C},\mathcal{W})$ is fixed and is known to the source and
destination (and also the wiretapper) beforehand. Here, we consider
the (fixed) code $\mathcal{C}$ chosen from ensembles of regular and
irregular LDPC codes.  The details will be discussed later.  For
convenience in the key generation step (and later in the search of
good irregular LDPC codes), the subspace $\mathcal{W}$ is chosen as
follows. Referring back to Step 2) of the scheme, choose a lower
triangular version\footnote{We can, without loss of generality, assume
  $H$ to be of full rank as discussed before.  Alternatively, an
  approximate lower triangular version of $H$ as described
  in~\cite{RichardsonIT01_3} can also be used if efficient encoding is
  needed.} of $H$, for example by performing Gaussian elimination on
the connection matrix of the bipartite graph of $\mathcal{C}$ as
discussed in~\cite{RichardsonIT01_3}.  Hence, $H=[A,B]$ where $B$ is
an $(n-l)\!\times\!  (n-l)$ lower triangular matrix.  Write $\tilde
Y^n=[d^l, e^{n-l}]$ where $d^l$ and $e^{n-l}$ are row vectors
containing $l$ and $n-l$ elements, respectively.  Then the syndrome
$S^{n-l}=d^lA^T+e^{n-l}B^T$, codeword $X_0^n = [d^l,
d^lA^T(B^{-1})^T]$ and coset leader $E_S^n = [0^T,
S^{n-l}(B^{-1})^T]$. Note that $d^l$ contains the systematic bits of
the codeword $X_0^n$ while $d^lA^T(B^{-1})^T$ contains the parity
bits. The subspace $\mathcal{W}$ is chosen to be the set of codewords
obtained by setting the first $k$ bits\footnote{It is easy to see that
  the secrecy performance is the same for any choice of $k$ bits in
  $d^l$ for the BP decoders described below.} in the vector $d^l$
above to zero. The quotient space $\mathcal{K}$ is isomorphic to the
set of codewords obtained by setting the last $l-k$ bits in the vector
$d^l$ to zero.  Hence we can use the first $k$ bits in $d^l$ as the
key. Since $(\mathcal{C},\mathcal{W})$ is known to the source
beforehand, there is no need to feed it back to the source via the
public channel in Step 2) of the secret sharing scheme. Step 3) of the
scheme is modified to replace ML decoding by the practical BP
decoding.

First, it is unlikely that the above fixed choice of $\mathcal{W}$
results in an LDPC code. Hence, the fixed coding scheme suggested here
is different from that of \cite{Muramatsu2006}. Second, the secrecy
analysis of \autoref{sec:ssldpc} can be easily modified to reflect the
use of the fixed secret sharing code $(\mathcal{C},\mathcal{W})$
mentioned above. In particular, the upper bound on the leakage rate in
\eqref{eqn:IKZgSf} becomes
\begin{equation} \label{eqn:IKS} \frac{1}{n}I(K;Z^n | E_S^{n}) \leq
  C_w(\tilde \beta) - (R_c-R_k) + R_k\epsilon_s + (R_c-R_k)\epsilon_w
  + \frac{2}{n},
\end{equation}
where $\epsilon_s$ and $\epsilon_w$ are now the error probabilities
achieved by the BP decoders at the source and wiretapper,
respectively. Since the bound above is derived from Fano's inequality,
it applies for any decoder (ML, BP, \emph{etc.}), and the value of the
bound depends on the choices of decoders only through $\epsilon_s$ and
$\epsilon_w$. Below, we perform computer simulation to estimate
$\epsilon_s$ and $\epsilon_w$ and then employ \eqref{eqn:IKS} to bound
the leakage rates achieved by $(\mathcal{C},\mathcal{W})$ constructed
from different choices of finite block length LDPC codes as described
above. More specifically, suppose that the key rate of a secret
sharing LDPC code $(\mathcal{C},\mathcal{W})$ is $R_k$ and
$\epsilon_s$ obtained from simulation is small. By setting $R_l$ to be
the value of the bound~\eqref{eqn:IKS} obtained as described, then
$(R_k,R_l)$ will be considered a key-leakage rate pair achievable by
$(\mathcal{C},\mathcal{W})$.

\subsection{Secret sharing regular LDPC codes}\label{subsec:regldpc}

We start by evaluating the secrecy performance of using regular LDPC
codes with small $d_v$ and $d_c$ in the secret sharing scheme
described above.  First, we pick $\mathcal{C}$ from the rate-$0.25$
$(3,4)$-regular LDPC code ensemble by realizing the random bipartite
graph experiment described in~\cite{RichardsonIT01_1} and then remove
all length-$4$ loops in the realization. The block length $n$ of the
LDPC code is set to $10^5$. As mentioned above, we need to estimate
the values of $\epsilon_s$ and $\epsilon_w$ from computer simulation.
To get $\epsilon_s$, BP decoding is implemented at the
source. Similarly, a BP decoder is implemented for the fictitious
receiver at the wiretapper to obtain $\epsilon_w$.  In order to
provide information about $L$ to the latter decoder, the intrinsic
log-likelihood ratios (LLRs) of the first $k$ elements in $d^l$, which
are associated with $L$, are explicitly set to $\pm \infty$ according
to the true bit values.  While this method may not be the optimal way
to feed information of $L$ to the BP decoder, we choose to employ it
because of its simplicity and the fact that this method also allows
simple density evolution analysis, which will be used to search for
good irregular LDPC codes in \autoref{subsec:irregldpc} below.

\autoref{fig:ck_rl_regular} shows the trajectory of $(R_k,R_l)$
achievable by the rate-$0.25$ secret sharing $(3,4)$-regular LDPC code
$(\mathcal{C},\mathcal{W})$ when the maximum allowable SNR
$P/\sigma^2$ is limited to $-0.15$~dB and $\alpha^2 =0$~dB. Different
values of $R_k$ on the trajectory shown are obtained by varying the
value of $k$ (i.e., the dimension of $\mathcal{W}$ also changes).
When obtaining each shown pair $(R_k,R_l)$, we choose $\tilde
\beta^2$, up to $P/\sigma^2$, such that $ \epsilon_s \leq 0.01$,
$\epsilon_w \leq 0.01$ and the bound in~\eqref{eqn:IKS} is
minimized. For any so-obtained pair $(R_k,R_l)$ located to the right
of the $45^{\circ}$ line in Fig.~\ref{fig:ck_rl_regular}, the bound
\eqref{eqn:IKS} becomes too loose, and the pair is not plotted. From
\autoref{fig:ck_rl_regular}, we observe that the pair
$(R_k,R_l)=(0.2,0.139)$ gives the smallest (bound on) leakage rate
that is achievable by the rate-$0.25$ secret sharing $(3,4)$-regular
LDPC code in the proposed scheme.

Next, we try to compare the secrecy performance of our secret sharing
scheme to that of~\cite{Muramatsu2006}.  As discussed near the end of
\autoref{sec:ssldpc}, the scheme of~\cite{Muramatsu2006} requires a
pair of independently chosen regular LDPC codes.  Since no practical
code designs or examples are provided in~\cite{Muramatsu2006}, we
choose an LDPC code pair for the scheme of~\cite{Muramatsu2006} that
is similar to the choice of our secret sharing code above for
comparison. For the scheme of~\cite{Muramatsu2006}, the first LDPC
code is set to be $\mathcal{C}$ above (i.e., the rate-$0.25$
$(3,4)$-regular LDPC code). The other code $\mathcal{C}'$ (from which
the secret key is generated) is chosen independently from another
regular LDPC code ensemble such that a desired key rate $R_k$ is
resulted (cf. \cite{Thangaraj2007}). Note that only a few values of
$R_k$ are possible if $d_v$ and $d_c$ are restricted to have small
values. Again, as discussed near the end of \autoref{sec:ssldpc}, the
pair $(\mathcal{C},\mathcal{C}')$ can be expressed in our
$(\mathcal{C},\mathcal{W})$ notation. As such, the LDPC subcode
$\mathcal{W}$ is obtained from concatenating parity-check matrices of
$\mathcal{C}$ and $\mathcal{C}'$. Note that $\mathcal{W}$ is in
general an irregular LDPC code. To clearly distinguish between our
scheme and the one of~\cite{Muramatsu2006} in the discussion below, we
will employ the notation $(\mathcal{C},\mathcal{C}')$ when referring
to the latter. The bound~\eqref{eqn:IKS} is employed to determine the
rate pairs $(R_k,R_l)$ that can be achieved by
$(\mathcal{C},\mathcal{C}')$, same as described before.

Under the parameter setting above ($P/\sigma^2=-0.15$~dB, $\alpha^2
=0$~dB, and $n=10^{5}$), we are not able to find a choice of
$\mathcal{C}'$ (with small $d_v$ and $d_c$) that satisfies the
requirement $\epsilon_w \leq 0.01$. In order to illustrate the
comparison between the two schemes, we increase the value of
$P/\sigma^2$ to $2.0$~dB. For this case, we pick $\mathcal{C}$ to be a
rate-$0.4$ $(3,5)$-regular LDPC code. The $(R_k,R_l)$-trajectory
achieved by our secret sharing scheme with $(\mathcal{C},\mathcal{W})$
is overlaid in~\autoref{fig:ck_rl_regular}. We see that the lowest
leakage rate achieved by this choice of $(\mathcal{C},\mathcal{W})$ is
at the pair $(R_k,R_l)=(0.22,0.173)$. For the scheme
of~\cite{Muramatsu2006}, picking $\mathcal{C}'$ to be an
$(1,3)$-regular LDPC code, the pair $(\mathcal{C},\mathcal{C}')$
achieves the key-leakage rate pair $(R_k,R_l)=(0.333,0.286)$ as shown
by the square symbol in~\autoref{fig:ck_rl_regular}. This value of
$R_l$ is the lowest that we can obtain from picking many different
$\mathcal{C}'$ with small $d_v$ and $d_c$.

Summarizing the above results, our secret sharing scheme outperforms
the scheme of~\cite{Muramatsu2006} when the respective code employed
in each scheme is restricted among the choices of regular LDPC codes
with small node degrees and finite block lengths.
However, we can observe that there is a significant gap between the
$(R_k, R_l)$ pairs achieved by the proposed scheme and the maximally
achievable $(C_{bq},R_l)$ key-leakage pair boundary. This illustrates
that regular LDPC codes with small $d_v$ and $d_c$ and finite block
length do not provide good secret sharing performance. 

\subsection{Secret sharing irregular LDPC
  codes} \label{subsec:irregldpc}
To improve secret sharing performance, we search for ``good''
irregular LDPC codes to be used as $\mathcal{C}$ in the proposed
scheme.  The structure of secret sharing code
$(\mathcal{C},\mathcal{W})$ described in the beginning of this section
facilitates the code search process because only the LDPC structure of
$\mathcal{C}$ needs to be optimized. Such optimization can be
performed by employing the density-evolution based linear programming
technique suggested in~\cite{chung2001dld}. The search objective is to
find an irregular LDPC secret-sharing code $(\mathcal{C},\mathcal{W})$
with maximum $R_c$, given a fixed $R_k$ such that both the decoding
error probabilities $\epsilon_s$ and $\epsilon_w$ in \eqref{eqn:IKS}
are vanishing as the BP decoders iterate. By \eqref{eqn:IKS}, this
results in minimization of the bound on $R_l$ for the fixed $R_k$.

Using standard notation, let the variable and check node degree
distribution polynomials of an irregular LDPC code ensemble be,
respectively, $\lambda(x) = \sum_{i=2}^{d_v} \lambda_i x^{i-1}$ and
$\rho(x) = \sum_{i=2}^{d_c} \rho_i x^{i-1}$, where $\lambda_i
(\rho_i)$ represents the fraction of edges emanating from the variable
(check) nodes of degree $i$.  We are to design an irregular LDPC code
$\mathcal{C}$ and its subcode $\mathcal{W}$ that work well for the
channel from the (quantized) destination to source and the channel
from the (quantized) destination to wiretapper, corresponding to the
error probabilities $\epsilon_s$ and $\epsilon_w$, respectively.  Fix
$\rho(x)$, and let $e_s(\ell)$ and $e_w(\ell)$ denote the bit error
probabilities obtained by the BP decoders at the source and
wiretapper, respectively, at the $\ell$th density evolution
iteration~\cite{RichardsonIT01_1,chung2001dld} when an initial
$\tilde\lambda(x) = \sum_{i=2}^{d_v} \tilde\lambda_i x^{i-1}$ is used.
Now, let $A_{\ell,j}$ denote the bit error probability obtained at the
source by running the density evolution for $\ell$ iterations, in
which $\tilde\lambda(x)$ is used as the variable node degree
distribution for the first $\ell - 1$ iterations and the variable node
degree distribution with a singleton of unit mass at degree $j$ is
used for the final iteration. Let $B_{\ell,j}$ denote the similar
quantity for bit error probability obtained at the wiretapper. Then,
we have $e_s(\ell)=\sum_{j=2}^{d_v} A_{\ell,j} \tilde\lambda_j$ and
$e_w(\ell)=\sum_{j=2}^{d_v} B_{\ell,j} \tilde\lambda_j$.  Note that
the values of $A_{\ell,j}$ and $B_{\ell,j}$ are obtained via density
evolution. To account for the availability of perfect information of
the $k$ bits corresponding to the key at the wiretapper's BP decoder,
the intrinsic LLR distribution entered into the density evolution
analysis for the wiretapper's decoder is set to be a mixture of the
distribution of the channel outputs at the wiretapper (with the
quantized destination symbols as the channel input) and an impulse at
$+\infty$. The weights of the two components in the mixture are
determined by the value of $R_k$.

Let $\epsilon>0$ be a small prescribed error tolerance. Suppose that
$\tilde \lambda(x)$ satisfies the property that $e_s(M_s) \leq
\epsilon$ and $e_w(M_w) \leq \epsilon$, for some integers $M_s$ and
$M_w$. Then, we can frame the $R_c$-maximizing code design problem as
the following linear program:
\begin{eqnarray*}
  && \max_{\lambda(x)} \sum_{j=2}^{d_v} \frac{\lambda_j}{j} \\
  \mbox{subject to} && \\
  &&\sum_{j=2}^{d_v} \lambda_j = 1, \mbox{~~~~~~~~~~~~}
  \lambda_i \geq 0  \mbox{~~for~}2 \leq i \leq d_v \mbox{,} \\
  &&\left |\sum_{j=2}^{d_v} A_{\ell,j} \lambda_j - e_s(\ell)\right|
  \leq \max[0,\delta (e_s(\ell-1)-e_s(\ell))], \mbox{~~~~~for~}1 \leq \ell \leq M_s \\
  &&\left |\sum_{j=2}^{d_v} B_{\ell,j} \lambda_j - e_w(\ell)\right| \leq \max[0,\delta (e_w(\ell-1)-e_w(\ell))],  \mbox{~~~~~for~}1 \leq \ell \leq M_w \\
  &&\sum_{j=2}^{d_v} A_{\ell,j} \lambda_j \leq e_s(\ell-1),  \mbox{~~~~~for~}1 \leq \ell \leq M_s \\
  &&\sum_{j=2}^{d_v} B_{\ell,j} \lambda_j \leq e_w(\ell-1),  \mbox{~~~~~for~}1 \leq \ell \leq M_w 
\end{eqnarray*}
where $d_v$ here is the maximum allowable degree of $\lambda(x)$ and
$\delta$ is a small positive number.  The solution $\lambda(x)$ of the
above linear program is then employed as the initial
$\tilde\lambda(x)$ for the next search round. The search process
continues this way until $e_s(M_s)$ or $e_w(M_w)$ becomes larger than
$\epsilon$, or until $\lambda(x)$ converges. We can also fix
$\lambda(x)$ and obtain a similar linear programming problem for
$\rho(x)$. The iterative search can then alternate between the linear
programs for $\lambda(x)$ and $\rho(x)$, respectively.

The secret sharing irregular LDPC codes presented below are obtained
from the code search procedure described above starting with
BSC-optimized LDPC codes, which are available from Urbanke's
website~\cite{Urbanke01}.  \autoref{fig:ck_rl_irregular_0dB} shows the
$(R_k,R_l)$-trajectory achieved by a rate-$0.25$ secret sharing
irregular LDPC code obtained by performing the above search with $R_k$
set to $0.155$ for the BPSK-constrained Gaussian wiretap channel when
$P/\sigma^2=-1.5$~dB and $\alpha^2=0$~dB.  The degree distribution
pair of this secret sharing irregular LDPC code is shown in
\autoref{tb:deg_dist_pair}.  We obtain an instance of the irregular
code by randomly generating a bipartite graph which satisfies the two
given degree-distribution constraints. Similar to the case of regular
codes, the block length $n=10^5$, and all length-$4$ loops are
removed.  Each shown $(R_k,R_l)$ pair is obtained in the same manner
as described in \autoref{subsec:regldpc} by using
\eqref{eqn:IKS}. From~\autoref{fig:ck_rl_irregular_0dB}, we observe
that the pair $(R_k,R_l)=(0.155,0.025)$ gives the lowest leakage rate
achievable by this secret sharing irregular LDPC code. For comparison,
we also plot in~\autoref{fig:ck_rl_irregular_0dB} the
$(R_k,R_l)$-trajectory achieved by the proposed secret sharing scheme
using a rate-$0.25$ BSC-optimized irregular LDPC code in place of the
secret sharing irregular LDPC code obtained from the code search
described above.  Note that since the channel from the (quantized)
destination to the source is a BSC, the use of the BSC-optimized LDPC
code is essentially the same as the reconciliation method proposed
in~\cite{Elkouss2009}.  For the BSC-optimized code, the pair
$(R_k,R_l)=(0.2,0.071)$ gives the lowest achievable leakage rate.

Similarly,~\autoref{fig:ck_rl_irregular_5dB} shows the secrecy
performance of the proposed scheme when $P/\sigma^2 = -4.9$~dB and
$\alpha^2 = 5$~dB. A rate-$0.12$ secret sharing irregular LDPC code is
obtained by fixing $R_k$ to $0.06$ in the code search.  The degree
distribution pair of this secret sharing irregular LDPC code is also
shown in Table~\ref{tb:deg_dist_pair}.  We observe that the lowest
leakage rate achieved by this code is given by the pair
$(R_k,R_l)=(0.062,0.019)$. Again, for comparison, the
$(R_k,R_l)$-trajectory achieved by replacing the secret sharing
irregular LDPC code obtained from the code search with a rate-$0.12$
BSC-optimized irregular LDPC code is also shown
in~\autoref{fig:ck_rl_irregular_5dB}. For the BSC-optimized irregular
LDPC code, the pair $(R_k,R_l)=(0.095,0.052)$ gives the lowest
achievable leakage rate. In conclusion, the secret sharing irregular
LDPC codes obtained from the proposed code search procedure
significantly outperform, in terms of secrecy performance, secret
sharing regular LDPC codes with small node degrees as well as
irregular LDPC codes that are optimized just for information
reconciliation.

\section{Conclusions}\label{sec:con}
In this paper, we developed schemes based on LDPC codes to allow a
source and a destination to share secret information over a
BPSK-constrained Gaussian wiretap channel. In the proposed
secret sharing schemes, the source first sends a random BPSK symbol
sequence to the destination through the Gaussian wiretap
channel. Then, the destination generates a syndrome of its quantized
received sequence using an LDPC code and sends this syndrome back to
the source via the public channel. Finally, the source performs
decoding to recover the quantized destination sequence based on its
transmitted sequence, as well as the syndrome that it receives from
the destination.  The secret key is obtained as the index of a coset
in a quotient space of the LDPC code.

To evaluate the performance of the proposed secret sharing scheme, we
employed an upper bound on the leakage information rate that depends
on the decoding error probabilities of the decoder at the source and
of a fictitious decoder at the wiretapper, which observes the
wiretapper received sequence, the syndrome in the public channel as
well as the secret key. The design was then converted to making these
error probabilities small. For a suitably chosen ensemble of regular
LDPC codes, we showed that these error probabilities can indeed be
made vanishing, as the block length increases, by ML decoding. As a
result, this established that the key capacity of the BPSK-constrained
Gaussian wiretap channel can be achieved by employing the secret
sharing regular LDPC code ensemble in the proposed scheme.

Considering the practical constraints of finite block length and using
BP decoding instead of ML decoding, we employed a density-evolution
based linear program to search for good irregular LDPC codes that can
be used in the secret sharing scheme. Simulation results showed that
the secret sharing irregular LDPC codes obtained from our search can
get relatively close to the relaxed key capacity of the
BPSK-constrained Gaussian wiretap channel, significantly outperforming
regular LDPC codes as well as irregular LDPC codes that are optimized
just for information reconciliation.

Finally, we point out that the arguments in the proof of
\autoref{thm:relax_keyach} can be modified to show the existence of an
LDPC code (from the same regular LDPC code ensemble considered in
\autoref{sec:ssldpc}) that achieves the secrecy
capacity~\cite{Wyner1975,Leung1978} of the Gaussian wiretap channel
with the BPSK source-symbol constraint. The code search approach
described in \autoref{subsec:irregldpc} can also be employed to find
irregular LDPC codes that give secrecy performance close to the
boundary of the secrecy-equivocation rate region of that channel.

\bigskip
\appendices
\section{Sketch of Proof of
  Theorem~\ref{thm:keycapgen}}\label{app:pf_keycapgen}

The proof of~\cite[Theorem 2.1]{Wong2009}, which corresponds to the
case when $R_l=0$, can be easily extended to accommodate Conditions 2
and 3 in the definition of achievable key-leakage rate pair.

First, consider the converse proof. Any permissible secret sharing
strategy that achieves the key-leakage rate pair $(R,R_l)$ must
satisfy (cf. \cite[Eqn. (7)]{Wong2009})
\begin{equation}\label{eqn:IKL}
R < \frac{1}{1-\varepsilon} \left[\frac{1}{n} I(K;L) + \frac{1}{n} +
\varepsilon^2\right] + \varepsilon.
\end{equation}
From Conditions 2, 3, and the chain rule, we have
\begin{eqnarray*}
\frac{1}{n} I(K;L) &\leq& 
\frac{1}{n} I(K;L|Z^n,\Phi^t,\Psi^t) + \frac{1}{n}
I(K;Z^n|\Phi^t,\Psi^t)
+ \frac{1}{n} I(K;\Phi^t,\Psi^t) \\
&\leq&
\frac{1}{n} I(K;L|Z^n,\Phi^t,\Psi^t) + R_l + 2\varepsilon 
\leq \frac{1}{n} \sum_{j=1}^{n} I(X_j;Y_j|Z_j) + R_l +
2\varepsilon,
\end{eqnarray*}
where the last inequality is due to the bound
$I(K;L|Z^n,\Phi^t,\Psi^t) \leq \sum_{j=1}^{n} I(X_j;Y_j|Z_j)$ which is
shown in~\cite[pp. 1129--1130]{Ahlswede1993}.  Similarly, using the
chain rule and Condition 2, we also have
\begin{eqnarray*}
\frac{1}{n} I(K;L) &\leq& 
\frac{1}{n} I(K;L|\Phi^t,\Psi^t) + \frac{1}{n} I(K;\Phi^t,\Psi^t) \\
&\leq&
\frac{1}{n} I(K;L|\Phi^t,\Psi^t) + \varepsilon
\leq \frac{1}{n} \sum_{j=1}^{n} I(X_j;Y_j) + \varepsilon,
\end{eqnarray*}
where the last inequality is due to the bound $I(K;L|\Phi^t,\Psi^t)
\leq \sum_{j=1}^{n} I(X_j;Y_j)$, which again can be shown by a simple
modification to~\cite[pp. 1129--1130]{Ahlswede1993}.

As in \cite{Wong2009}, let $Q$ be a uniform random variable that takes
value from $\{1,2,\ldots,n\}$ and is independent of all other random
quantities. Define $(\acute X, \acute Y, \acute Z) = (X_j, Y_j, Z_j)$
if $Q=j$. Then $p_{\acute Y, \acute Z| \acute X}(\acute y, \acute
z|\acute x)=p_{Y,Z|X}(\acute y, \acute z|\acute x)$.  Combining the
two upper bounds on $\frac{1}{n} I(K;L)$ above, we have
\begin{eqnarray}\label{eqn:ubIKL}
\frac{1}{n} I(K;L) &\leq& \min\left\{ I(\acute X; \acute Y|\acute Z,
Q) + R_l,
I(\acute X;\acute Y|Q) \right\} + 2\varepsilon  \nonumber \\
&\leq& \min\left\{ I(\acute X; \acute Y|\acute Z) + R_l, I(\acute X;
\acute Y) \right\} + 2\varepsilon.
\end{eqnarray}
The power constraint (\ref{eqn:powerconstr}) implies that
$E[|\acute X|^2] 
\leq P$.
Combining~\eqref{eqn:IKL} and~\eqref{eqn:ubIKL}, we obtain
\begin{equation} \label{eqn:ubR}
  R < \frac{1}{1-\varepsilon} \left[
    \min\left\{ I(\acute X; \acute Y|\acute Z) + R_l, I(\acute X; \acute Y) \right\} + 2\varepsilon + \frac{1}{n}
  \right].
\end{equation}
Since $\varepsilon$ can be arbitrarily small, \eqref{eqn:ubR} implies
the converse result, i.e.,
\begin{eqnarray*}
R &\leq&
\min\left\{ I(\acute X; \acute Y|\acute Z) + R_l, I(\acute X; \acute Y) \right\} \\
&\leq&
\max_{X:E[|X|^2]\leq P]} \min\left\{ I(X; Y|Z) + R_l, I(X;Y) \right\} \\
&=& \max_{X:E[|X|^2]\leq P]} \min\left\{ I(X;Y)-I(Y;Z) + R_l, I(X;Y)
\right\},
\end{eqnarray*}
where the last line is due to the fact that $p(y,z|x)=p(y|x)p(z|x)$.

The achievability proof based on random Wyner-Ziv coding
in~\cite[Section 4]{Wong2009} can be used to achieve the $R_l$-relaxed
key capacity with proper modifications.  Since the code construction
statement in \cite[Section 4]{Wong2009} is rather long, we only point
out here the steps that are different for the current case due to
space limitation. The other details of the proof can be found in
\cite{Wong2009}. We also adopt the notation of \cite{Wong2009} for
easy reference.

First, fix the source distribution $p(x)$ that achieves the maximum in
the $R_l$-relaxed key capacity expression. If $R_l < I(Y;Z)$, then
modify the code construction in \cite[Section 4]{Wong2009} with the
new definitions of $R_3=I(X;\hat Y) - I(\hat Y;Z) + R_l - \varepsilon$
and $R_4=I(\hat Y;Z) - R_l - 17 \varepsilon$. Note the $p(\hat y|y)$
should be chosen to make these rates positive. The asymptotic
negligibility of $\frac{1}{n}I(K;J)$ conditioned on the code
$\mathcal{C}_n$ used in \cite[Section 4]{Wong2009} is the only
argument needed in this case that is not explicitly shown in
\cite[Section 4]{Wong2009}.  We assume below that the code
$\mathcal{C}_n$ is used. To establish that, first similar to (73) of
\cite{Wong2009} , we have
\begin{equation}\label{eqn:IKJ}
I(K;J) \leq I(L;J) + 8n\varepsilon R_3+1.
\end{equation}
by using an argument similar to that of (73) of \cite{Wong2009}. Then
for $j=1,2,\ldots,2^{nR_2}$ and $l=1,2,\ldots,2^{nR_3}$, we have
\begin{eqnarray*}
\Pr\{J=j,L=l\} &=&
\sum_{w=1}^{2^{nR_4}} \Pr\left\{M=j+(l-1)2^{nR_2}+(w-1)2^{n(R_2+R_3)}\right\}\\
&\leq& \frac{2^{-n(R_2+R_3-7\varepsilon)}}{1-\varepsilon} <
2^{-n(R_2+R_3-8\varepsilon)}
\end{eqnarray*}
for sufficiently large $n$, where the first inequality is from
\cite[Part 3 of Lemma 6]{Wong2009}. In other words, $H(J,L)>
n(R_2+R_3-8\varepsilon)$ for sufficiently large $n$. Hence, together
with the facts $H(L)<nR_3$ and $H(J)<nR_2$, we have
\[
I(L;J) = H(L)+H(J)-H(J,L)
\leq nR_3+nR_2 - n(R_2+R_3-8\varepsilon) = 8n\varepsilon.
\]
Putting this bound back to~\eqref{eqn:IKJ}, we obtain $\frac{1}{n}
I(K;J) \leq 8\varepsilon (R_3+1)+\frac{1}{n}$.  Since $\varepsilon$
can be chosen arbitrarily, we establish the achievability of the
relaxed key capacity.  On the other hand, if $R_l \geq I(Y;Z)$, the
code construction described above can be trivially modified to achieve
the relaxed key capacity by setting $R_4=0$ and $R_3$ arbitrarily
close to $I(X;\hat Y)$.

\section{Proof of Lemma~\ref{lm:existldpc}}\label{app:2}
As mentioned in the proof of Theorem~\ref{thm:relax_keyach}, we adapt
the proof of~\cite[Theorem 3]{Miller2001} to prove this lemma. The
main argument is to establish that there is a secret sharing
$(d_v,d_c)$-regular LDPC code ensemble $(\mathcal{C},\mathcal{W})$ for
which the ensemble average error probabilities $\bar \epsilon_s$ and
$\bar \epsilon_w$ simultaneously vanish as $n$ increases under the
assumptions stated in the lemma.

To that end, we first examine the average weight spectra of the code
$\mathcal{C}$ and subspace $\mathcal{W}$ in the LDPC code ensemble:
\begin{lemma} \label{thm:weightspectra}
  Consider the ensemble of $(n,l,k)$ secret sharing code
  $(\mathcal{C},\mathcal{W})$ described in
  \autoref{sec:ssldpc}. For $0<m\leq n$, let $\bar S_m$ and $\bar
  T_m$ be the average numbers of codewords of Hamming weight $m$ in
  $\mathcal{C}$ and $\mathcal{W}$, respectively. Then, we have
\begin{eqnarray}
  \label{eqn:Sm}
  \bar S_m &=& {n \choose m} \Pr( x^n \in \mathcal{C} | w(x^n)=m)  \\
  \bar T_m &=& \frac{2^{l-k}-1}{2^l - 1} \cdot \bar S_m
  \leq 2^{-k} \bar S_m \label{eqn:Tm}
\end{eqnarray}
where $w(x^n)$ is the Hamming weight of $x^n$.
\end{lemma}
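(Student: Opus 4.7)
The plan is to prove the two formulas separately by linearity of expectation plus a symmetry/counting argument, handling $\mathcal{C}$ first and then $\mathcal{W}$ conditional on $\mathcal{C}$.

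For the formula for $\bar S_m$, I would write
\[
\bar S_m \;=\; E\!\left[\,\sum_{x^n:\,w(x^n)=m} \mathbb{1}(x^n\in\mathcal{C})\,\right]
\;=\;\sum_{x^n:\,w(x^n)=m} \Pr(x^n\in\mathcal{C}).
\]
The key observation is that the random bipartite graph construction of~\cite{RichardsonIT01_1} is invariant under any permutation $\pi$ of the $n$ variable nodes: the induced distribution on parity-check matrices assigns equal probability to $H$ and to $H\pi$. Hence the event $\{x^n\in\mathcal{C}\}$ has the same probability as $\{\pi(x^n)\in\mathcal{C}\}$, and so $\Pr(x^n\in\mathcal{C})$ depends on $x^n$ only through its Hamming weight. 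The sum above therefore equals $\binom{n}{m}\Pr(x^n\in\mathcal{C}\mid w(x^n)=m)$, giving~\eqref{eqn:Sm}.

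For $\bar T_m$, I would condition on $\mathcal{C}$. Given $\mathcal{C}$, a fixed nonzero $v\in\mathcal{C}$ lies in a uniformly chosen $(l-k)$-dimensional subspace $\mathcal{W}$ of $\mathcal{C}$ with probability computed by double-counting incident pairs: the total number of $(\mathcal{W},u)$ with $u\in\mathcal{W}\setminus\{0\}$ is $N_{l-k}\cdot(2^{l-k}-1)$, where $N_{l-k}$ is the number of $(l-k)$-dimensional subspaces of $\mathcal{C}$. By the transitive action of $GL(l,\mathbb{F}_2)$ on the $2^l-1$ nonzero vectors of $\mathcal{C}$ (which preserves the uniform distribution on $(l-k)$-subspaces), every nonzero vector is contained in the same number of such subspaces. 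Thus
\[
\Pr\!\left(v\in\mathcal{W}\,\big|\,\mathcal{C},\,v\in\mathcal{C}\setminus\{0\}\right)
\;=\;\frac{2^{l-k}-1}{2^l-1}.
\]
For $m>0$, since $\mathcal{W}\subseteq\mathcal{C}$ forces $\{x^n\in\mathcal{W}\}\subseteq\{x^n\in\mathcal{C}\}$, taking conditional and then unconditional expectation gives
\[
\Pr(x^n\in\mathcal{W})\;=\;\frac{2^{l-k}-1}{2^l-1}\,\Pr(x^n\in\mathcal{C}),
\]
and summing over the weight-$m$ vectors yields $\bar T_m=\tfrac{2^{l-k}-1}{2^l-1}\bar S_m$. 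The inequality $\bar T_m\leq 2^{-k}\bar S_m$ follows from $\tfrac{2^{l-k}-1}{2^l-1}<\tfrac{2^{l-k}}{2^l}=2^{-k}$.

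The only nontrivial point is the orbit-counting identity for the number of $(l-k)$-subspaces containing a fixed nonzero vector; everything else reduces to linearity of expectation and the permutation symmetry of the regular LDPC ensemble. The zero weight case is excluded by the hypothesis $m>0$, which avoids having to treat $0\in\mathcal{W}$ separately.
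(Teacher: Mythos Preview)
Your argument is correct and follows the same overall structure as the paper: both reduce~\eqref{eqn:Tm} to showing that, conditional on $\mathcal{C}$, every nonzero $v\in\mathcal{C}$ lies in a uniformly random $(l-k)$-dimensional subspace with probability $\tfrac{2^{l-k}-1}{2^l-1}$, and then multiply by $\bar S_m$. The difference is in how that probability is established. The paper counts directly: it quotes the Gaussian binomial for the total number of $(l-k)$-subspaces of $\mathcal{C}$, and then sets up a bijection between $(l-k)$-subspaces of $\mathcal{C}$ containing a fixed nonzero $x_0^n$ and $(l-k-1)$-subspaces of the quotient $\mathcal{C}/\{0,x_0^n\}$, taking the ratio of the two explicit counts. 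You instead use transitivity of $GL(l,\mathbb{F}_2)$ on nonzero vectors together with a double-count of incident pairs $(\mathcal{W},u)$, which yields the same ratio without ever writing down a Gaussian binomial. Your route is shorter and more conceptual; the paper's route is more self-contained in that it never invokes the group action. One small remark: your strict inequality $\tfrac{2^{l-k}-1}{2^l-1}<2^{-k}$ assumes $k>0$; for $k=0$ it becomes equality, matching the $\leq$ in the statement.
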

\begin{proof}
  Eqn.~\eqref{eqn:Sm}, given in~\cite{Miller2001}, is obvious. It is
  also clear from the description of the code ensemble in
  \autoref{sec:ssldpc} that
\begin{eqnarray}
\bar T_m &=&
{n \choose m} \Pr(x^n \in \mathcal{W}| x \in \mathcal{C},
w(x^n)=m) \cdot \Pr( x^n \in \mathcal{C} | w(x^n)=m) \nonumber \\
&=&
\bar S_m \cdot \Pr(x^n \in \mathcal{W}| x^n \in \mathcal{C}, w(x^n)=m).
\label{eqn:PrxWC}
\end{eqnarray}
Consider any $x_0^n \neq 0^n \in \mathcal{C}$,
\[
\Pr(x^n_0 \in \mathcal{W}| x^n_0 \in \mathcal{C}) =
\frac{\mbox{number of $(l-k)$-dimensional subspaces in $\mathcal{C}$
    that contain $x_0^n$}}{\mbox{number of $(l-k)$-dimensional
    subspaces in $\mathcal{C}$}}.
\]
The number of $(l-k)$-dimensional subspaces in $\mathcal{C}$ is
$\displaystyle \prod_{u=1}^{l-k} \frac{2^{l-u+1}-1}{2^{l-k-u+1}-1}$
(see \cite[Theorem 7.1]{Kac02}). Further, let $\mathcal{X}_0
=\{0^n,x_0^n\}$, and let $\mathcal{C}'=\mathcal{C}/\mathcal{X}_0$ be
the quotient of $\mathcal{C}$ by $\mathcal{X}_0$. Then $\mathcal{C}'$
is a $(l-1)$-dimensional linear space. If $\mathcal{W}$ is an
$(l-k)$-dimensional subspace in $\mathcal{C}$ that contains $x_0^n$,
then $\mathcal{W}'=\mathcal{W}/\mathcal{X}_0$ is an
$(l-k-1)$-dimensional subspace in $\mathcal{C}'$. On the other hand,
suppose that $\mathcal{W}'$ is an $(l-k-1)$-dimensional subspace in
$\mathcal{C}'$. Then $\mathcal{W}=\cup_{w^n+\mathcal{X}_0 \in
  \mathcal{W}'} \ w^n+\mathcal{X}_0$ is an $(l-k)$-dimensional subspace
in $\mathcal{C}$ that contains $x_0^n$. It is also easy to see that the
correspondence between $\mathcal{W}'$ and $\mathcal{W}$ above is
one-to-one.  As a result, the number of $(l-k)$-dimensional subspaces
in $\mathcal{C}$ that contain $x_0^n$ must be the same as the number of
$(l-k-1)$-dimensional subspaces in $\mathcal{C}'$, i.e.,
$\displaystyle \prod_{u=1}^{l-k-1} \frac{2^{l-u}-1}{2^{l-k-u}-1}$. So
we have
\[
\Pr(x^n_0 \in \mathcal{W}| x^n_0 \in \mathcal{C}) = \frac{2^{l-k}-1}{2^l - 1}
\]
for all $x_0^n\neq 0 \in \mathcal{C}$. This implies
\[
\Pr(x^n \in \mathcal{W}| x^n \in \mathcal{C}, w(x^n)=m) =
\frac{2^{l-k}-1}{2^l - 1} \leq 2^{-k}
\]
for $0<m\leq n$.  Putting this back into \eqref{eqn:PrxWC}, we obtain
\eqref{eqn:Tm}.
\end{proof}
For $\mathcal{C}$ chosen uniformly from the $(d_v,d_c)$-regular LDPC
code ensemble as described in \autoref{sec:ssldpc}, an upper bound on
$\Pr( x^n \in \mathcal{C} | w(x^n)=m)$ is also available in
\cite[Lemma 2]{Miller2001}:
\begin{itemize}
\item If $md_v$ is odd, $\Pr( x^n \in \mathcal{C} | w(x^n)=m) = 0$.
\item If $md_v$ is even and $md_v \leq 2(n-l)$, $\displaystyle \Pr(
  x^n \in \mathcal{C} | w(x^n)=m) \leq {n-l \choose \frac{md_v}{2}}
  \left[\frac{md_v}{2(n-l)}\right]^{md_v}$.
\item If $md_v$ is even, $\displaystyle \Pr( x^n \in \mathcal{C} |
  w(x^n)=m) \leq \left[(n-l)d_c+1\right] \cdot
  \left[\frac{1+\left(1-\frac{2m}{n}\right)^{d_c}}{2}\right]^{n-l}$.
\end{itemize}
In addition, $\Pr( x^n \in \mathcal{C} | w(x^n)=m) = \Pr( x^n \in
\mathcal{C} | w(x^n)=n-m)$ (and hence $\bar S_{n-m} = \bar S_m$) if
$d_c$ is even.

Next, we employ Lemma~\ref{thm:weightspectra} and the combined union
and Shulman-Feder bound in \cite[Theorem 1]{Miller2001} to bound $\bar
\epsilon_s$ and $\bar \epsilon_w$.  To bound $\bar \epsilon_w$,
consider the channel with $\tilde Y^n$ as input and $Z^n$ as
output. First, note that $\tilde Y^n$ contains i.i.d. equally likely
binary elements.  Hence, this channel is a memoryless BISO channel,
and is specified by the conditional pdf $p_{Z|\tilde Y}(z|\tilde y) =
p_{Z|X}(z|1) p_{X|\tilde Y}(1|\tilde y)+p_{Z|X}(z|-1) p_{X|\tilde
  Y}(-1|\tilde y)$. Since $E_S^n+\hat X^n_0+\mathcal{W}$ is a coset
and the channel is memoryless BISO, it suffices to assume $\tilde
Y^n=\tilde X^n_0 \in \mathcal{W}$. In addition, note that all possible
$\tilde X^n_0$ sequences are equally likely. Now, let $\tilde
K=\frac{6}{d_v} \ln \frac{d_v}{1-R_c}$ and
$\bar\beta=\frac{2(1-R_c)}{d_v} e^{-12-\tilde K}$. For any $\bar\beta <
\gamma < \frac{1}{2}$, applying the bound in \cite[Theorem
1]{Miller2001} to the subcode $\mathcal{W}$, the ensemble average
decoding error probability of the ML decoder at the wiretapper can be
upper-bounded as
\begin{equation}
\bar \epsilon_w \leq 
\begin{cases}
  \tau_1 + \tau_2 + 2^{-nE_r^w\left(R_c-R_k+\frac{1}{n}\log_2
      \alpha_w\right)} & \mbox{~for odd $d_c$} \\
  \tau_1 + \tau_2 + \tau_3 + \tau_4 + \tau_5 +
  2^{-nE_r^w\left(R_c-R_k+\frac{1}{n}\log_2
      \alpha_w\right)} & \mbox{~for even $d_c$},
\end{cases}
\label{eqn:mlboundew}
\end{equation}
where $\tau_1 = \sum_{m=1}^{\bar\beta n} \bar T_m D_w^m$, $\tau_2 =
\sum_{m=\bar\beta n+1}^{\gamma n} \bar T_m D_w^m$, $\tau_3 =
\sum_{m=n-\gamma n}^{n-\bar\beta n-1} \bar T_m D_w^m$, $\tau_4 =
\sum_{m=n-\bar\beta n}^{n-1} \bar T_m D_w^m$, $\tau_5 = \bar T_n D_w^n$,
$D_w=\int \sqrt{p_{Z|\tilde Y}(z|1)\cdot p_{Z|\tilde Y}(z|-1)}\,dz$,
\[
\alpha_w = 
\begin{cases}
  \max_{m\in \{\gamma n +1, \ldots, n\}} \frac{\bar T_m}{2^{l-k}-1}
  \cdot
  \frac{2^n}{{n \choose m}}& \mbox{~for odd $d_c$} \\
  \max_{m\in \{\gamma n +1, \ldots, n-\gamma n - 1\}} \frac{\bar
    T_m}{2^{l-k}-1} \cdot \frac{2^n}{{n \choose m}}& \mbox{~for even
    $d_c$},
\end{cases}
\]
and $E_r^w(R) = \max_{q}\max_{0\leq \rho\leq 1} \{ E_0^w(\rho,q) -
\rho R\}$ is the random coding error exponent with
\[
E_0^w(\rho,q) = - \log_2 \int \left[ q(1)p_{Z|\tilde
    Y}(z|1)^{1/(1+\rho)} + q(-1)p_{Z|\tilde Y}(z|-1)^{1/(1+\rho)}
\right]^{1+\rho} dz,
\]
and $q$ is the probability mass function (pmf) of the channel input
$\tilde Y$.  It is known that the optimal $q$ is $q(1)=q(-1)=0.5$.

Employing Lemma~\ref{thm:weightspectra} and the bound on
$\Pr(x^n\in\mathcal{C}|w(x^n)=m)$ that follows (see also \cite[Lemma
2]{Miller2001}), it is not hard to further bound the various terms in
\eqref{eqn:mlboundew}:
\[
\tau_1 \leq \begin{cases}
2^{-nR_k}\, n^{1-d_v/2}\, (1-R_c)^{-d_v/2}\,\frac{D_w}{1-D_w}\,\frac{(d_v/2)^{d_v}}{(d_v/2)!} & \mbox{for even $d_v$} \\
2^{-nR_k}\, n^{2-d_v}\, (1-R_c)^{-d_v}\,\frac{D_w^2}{2(1-D_w^2)}\,\frac{(d_v)^{2d_v}}{d_v!} & \mbox{for odd $d_v$},
\end{cases}
\]
\begin{eqnarray*}
\frac{\log_2 \tau_2}{n} &\leq& 
\frac{1}{n} \left\{ \log_2 n + \log_2 [(n-k)d_c+1 \right]\} - R_k \\
& & ~+
\max_{\bar\beta\leq x\leq \gamma}\left\{ x\log_2 D_w + H_2(x)+ (1-R_c)\left(\log_2[1+(1-2x)^{d_c}]-1\right)\right\},
\end{eqnarray*} 
and for even $d_c$, 
\[
\tau_4 = \sum_{m=1}^{\bar\beta n} \bar T_m D_w^m D_w^{n-2m} \leq \tau_1 D_w^{n(1-2\bar\beta)},
\]
\[
\frac{\log_2 \tau_3}{n} \leq \frac{\log_2 \tau_2}{n} + (1-2\gamma)\log_2 D_w,
\]
and
\[
\tau_6 \leq 2^{-nR_k} D_w^{n} = 2^{-n(R_k-\log_2 D_w)}.
\]
Also,
\begin{eqnarray*}
\frac{\log_2 \alpha_w}{n} 
&\leq&
\begin{cases}
\displaystyle \frac{1}{n} \left\{ 1 + \log_2 [(n-l)d_c+1 \right]\} +
(1-R_c)\max_{\gamma \leq x \leq 1} \log_2[1+(1-2x)^{d_c}] &
\mbox{~for odd $d_c$} \\
\displaystyle \frac{1}{n} \left\{ 1 + \log_2 [(n-l)d_c+1 \right]\} +
(1-R_c)\max_{\gamma \leq x \leq 1-\gamma} \log_2[1+(1-2x)^{d_c}] &
\mbox{~for even $d_c$}
\end{cases}\\
&\leq&
\frac{1}{n} \left\{ 1 + \log_2 [(n-l)d_c+1 \right]\} +
(1-R_c)\log_2[1+(1-2\gamma)^{d_c}].
\end{eqnarray*}

For bounding $\bar \epsilon_s$, note that the channel with $\tilde
Y^n$ as input and $X^n$ as output is a memoryless BSC and is specified
by the conditional pmf $p_{X|\tilde Y}(x|\tilde y) = p_{\tilde
  Y|X}(\tilde y|x)$. Again, since $E_S^n+\mathcal{C}$ is a coset and
the channel is memoryless BISO, it suffices to assume $\tilde
Y^n=X^n_0 \in \mathcal{C}$. With this identification, the resulting
bound on $\bar \epsilon_s$ follows the same line of arguments as
above, and is essentially given in \cite{Miller2001}. We summarize the
bound below for later reference:
\begin{equation}
\bar \epsilon_s \leq 
\begin{cases}
  \sigma_1 + \sigma_2 + 2^{-nE_r^s\left(R_c+\frac{1}{n}\log_2
      \alpha_s\right)} & \mbox{~for odd $d_c$} \\
  \sigma_1 + \sigma_2 + \sigma_3 + \sigma_4 + \sigma_5 +
  2^{-nE_r^s\left(R_c+\frac{1}{n}\log_2 \alpha_s\right)} & \mbox{~for
    even $d_c$},
\end{cases}
\label{eqn:mlboundes}
\end{equation}
where 
\[
\sigma_1 \leq \begin{cases}
n^{1-d_v/2}\, (1-R_c)^{-d_v/2}\,\frac{D_s}{1-D_s}\,\frac{(d_v/2)^{d_v}}{(d_v/2)!} & \mbox{for even $d_v$} \\
n^{2-d_v}\, (1-R_c)^{-d_v}\,\frac{D_s^2}{2(1-D_s^2)}\,\frac{(d_v)^{2d_v}}{d_v!} & \mbox{for odd $d_v$},
\end{cases}
\]
\begin{eqnarray*}
\frac{\log_2 \sigma_2}{n} &\leq& 
\frac{1}{n} \left\{ \log_2 n + \log_2 [(n-l)d_c+1 \right]\} \\
& & ~+
\max_{\bar\beta\leq x\leq \gamma}\left\{ x\log_2 D_s + H_2(x)+ (1-R_c)\left(\log_2[1+(1-2x)^{d_c}]-1\right)\right\},
\end{eqnarray*} 
and for even $d_c$, 
\[
\sigma_4 = \sum_{m=1}^{\bar\beta n} \bar T_m D_s^m D_s^{n-2m} \leq \sigma_1 D_s^{n(1-2\bar\beta)},
\]
\[
\frac{\log_2 \sigma_3}{n} \leq \frac{\log_2 \sigma_2}{n} + (1-2\gamma)\log_2 D_s,
\]
\[
\sigma_5 \leq D_s^{n} = 2^{n\log_2 D_s},
\]
and
\[
\frac{\log_2 \alpha_s}{n} \leq \frac{1}{n} \left\{ 1 + \log_2
  [(n-l)d_c+1 \right]\} + (1-R_c)\log_2[1+(1-2\gamma)^{d_c}],
\]
with $D_s=2\sqrt{p_{X|\tilde Y}(1|1)\cdot p_{X|\tilde Y}(1|-1)}$,
and $E_r^s(R) = \max_{q}\max_{0\leq \rho\leq 1} \{ E_0^s(\rho,q) -
\rho R\}$ is the random coding error exponent of the channel of
interest based on
\begin{eqnarray*}
E_0^s(\rho,q) 
&=& 
- \log_2 \bigg\{ \left[ q(1)p_{X|\tilde Y}(1|1)^{1/(1+\rho)} + q(-1)p_{X|\tilde Y}(1|-1)^{1/(1+\rho)} \right]^{1+\rho} \\
&& ~~~~~~~~~~~~+
\left[ q(1)p_{X|\tilde Y}(-1|1)^{1/(1+\rho)} + q(-1)p_{X|\tilde Y}(-1|-1)^{1/(1+\rho)} \right]^{1+\rho}\bigg\}.
\end{eqnarray*}

Recall that $R_c < C_s(\tilde\beta)$ and $R_c - R_k <
C_w(\tilde\beta)$. Choose $\varepsilon>0$ small enough such that $R_c
+ 2\varepsilon < C_s(\tilde\beta)$ and $R_c - R_k + 2\varepsilon <
C_w(\tilde\beta)$. For any $0 < \gamma < 0.5$, there exist large
enough $d_v$ and $d_c$ such that
\begin{enumerate}
    \item $\frac{d_v}{d_c} = 1-R_c$,
    \item $0 < \bar\beta < \gamma$,
    \item $\tilde K<\varepsilon$, and
    \item $\log_2 \left[1+(1-2\gamma)^{d_c}\right] < \varepsilon.$
\end{enumerate}
With this choice of $(d_v,d_c)$, we have
\begin{eqnarray*}
  \lefteqn{\max_{\bar\beta\leq x\leq \gamma}\left\{H_2(x)+
      (1-R_c)\left(\log_2[1+(1-2x)^{d_c}]-1\right)\right\} } \\
  &\leq&
  H_2(\gamma) + (1-R_c)\left\{\log_2[1+(1-2\bar\beta)^{d_c}]-1\right\} \\
  &\leq&
  H_2(\gamma) + (1-R_c)\left[\log_2 \left(1+e^{-2d_c\bar\beta}\right)-1\right] \\
  &\leq& H_2(\gamma) + (1-R_c)\left[\log_2
    \left(1+e^{-4e^{-12-\varepsilon}}\right)-1\right]
\end{eqnarray*}
for any $0<\gamma<0.5$, where the second inequality follows from the
inequality $1-2x < e^{-2x}$ and the last inequality follows from the
definition of $\bar\beta$. Hence, we can make
\begin{equation*}
\max_{\bar\beta\leq x\leq \gamma}\left\{H_2(x)+
(1-R_c)\left(\log_2[1+(1-2x)^{d_c}]-1\right)\right\} < 0
\end{equation*}
by choosing $\gamma$ small enough since $C_s(\tilde\beta)\leq 1$. Thus
for sufficiently large $n$, we get the following results,
\begin{enumerate}
    \item $\frac{1}{n}\log_2\tau_2 < 0$ and $\frac{1}{n}\log_2\tau_3 < 0$,
    \item $\frac{1}{n}\log_2\sigma_2 < 0$ and $\frac{1}{n}\log_2\sigma_3 < 0$,
    \item $R_c-R_k+\frac{1}{n}\log_2 \alpha_w \leq
      R_c-R_k+(1-R_c)\varepsilon + \varepsilon < C_w(\tilde\beta)$, and
    \item $R_c+\frac{1}{n}\log_2 \alpha_s \leq R_c+(1-R_c)\varepsilon
      + \varepsilon < C_s(\tilde\beta)$.
\end{enumerate}
Further, by the well known fact that the random coding exponent is
positive if its rate argument is below channel capacity, we obtain the
stated asymptotic behaviors of $\bar \epsilon_s$ and $\bar
\epsilon_w$.

\begin{table}[h]
  \caption{Degree distribution pairs of the rate-$0.25$ and
    rate-$0.12$ secret sharing irregular LDPC codes obtained from the
    code search process described in~\autoref{subsec:irregldpc}.}
  \centering
  \begin{tabular}{|c||c|c|}
    \hline
    &rate-$0.25$ & rate-$0.12$ \\
    \hline
    \hline
    $\lambda_2$ &  0.2807 & 0.3651 \\
    \hline
    $\lambda_3$ &  0.1490 & 0.1610 \\
    \hline
    $\lambda_4$ & 0.0725 & \\
    \hline
    $\lambda_5$ & & 0.1081 \\
    \hline
    $\lambda_6$ & & 0.0540 \\
    \hline
    $\lambda_7$ &  0.0599 & \\
    \hline
    $\lambda_8$ &  0.1343 & \\
    \hline
    $\lambda_{11}$ &  & 0.1123  \\
    \hline
    $\lambda_{12}$ &  & 0.0057  \\
    \hline
    $\lambda_{21}$ & 0.0697 &  \\
    \hline
    $\lambda_{22}$ & 0.0872 & \\
    \hline
    $\lambda_{28}$ &  & 0.0650 \\
    \hline
    $\lambda_{29}$ &  & 0.0403 \\
    \hline
    $\lambda_{70}$ & 0.0006 & \\
    \hline
    $\lambda_{71}$ & 0.0264 & \\
    \hline
    $\lambda_{72}$ & 0.1197  & \\
    \hline
    $\lambda_{87}$ &  & 0.0806 \\
    \hline
    $\lambda_{88}$ & & 0.0799 \\
    \hline
    \hline
    $\rho_4$ & & 0.9705 \\
    \hline
    $\rho_5$ & 0.4637 & 0.0295 \\
    \hline
    $\rho_6$ & 0.5363 & \\
    \hline
  \end{tabular}
\label{tb:deg_dist_pair}
\end{table}

\begin{figure}[H]
\centering
\begin{center}
\includegraphics[width=0.57\textwidth]{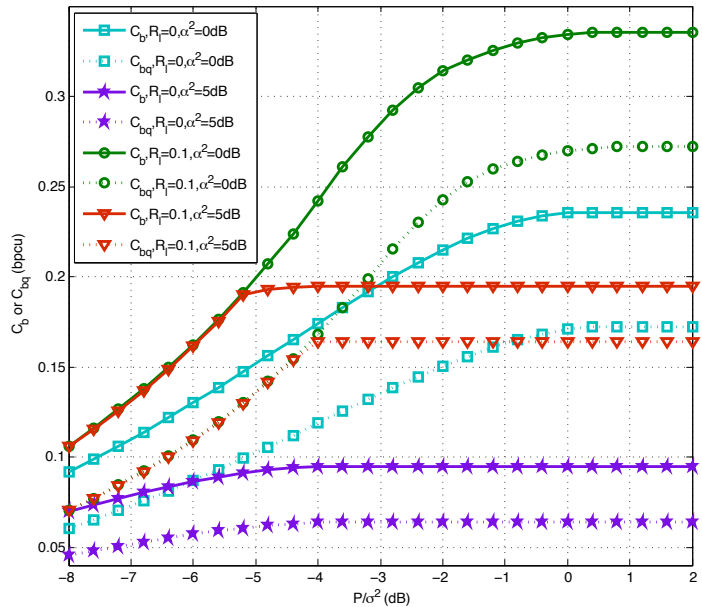}
\end{center}
\vspace*{-20pt}
\caption{Comparison between the relaxed key capacities $C_b$ and
  $C_{bq}$ for different values of maximum allowable leakage rate
  $R_l$ over the BSPK-constrained Gaussian wiretap channel. For
  $C_{bq}$, symbol-by-symbol hard-decision quantization is imposed at
  the destination.}
\label{fig:ck_p} 
\end{figure}

\begin{figure}[H]
\centering
\begin{center}
\includegraphics[width=0.57\textwidth]{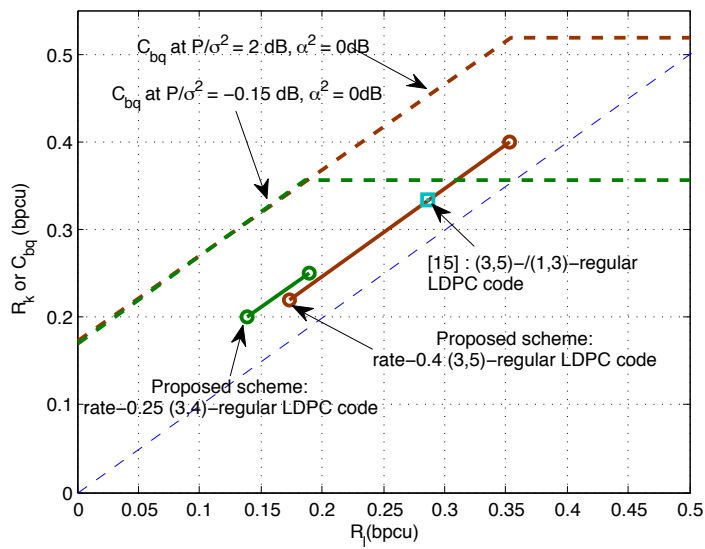}
\end{center}
\vspace*{-20pt}
\caption{Plot of the $(R_k,R_l)$-trajectories achieved by the proposed
  secret sharing scheme employing secret sharing regular LDPC codes
  $(\mathcal{C},\mathcal{W})$ with block length of $10^{5}$. Two cases
  are shown in the figure. The green curve corresponds to the case of
  $P/\sigma^2=-0.15$~dB, $\alpha^2=0$~dB, and $\mathcal{C}$ is a
  rate-$0.25$ $(3,4)$-regular LDPC code. The brown curve corresponds to
  the case of $P/\sigma^2=2$~dB, $\alpha^2=0$~dB, and $\mathcal{C}$ is
  a rate-$0.4$ $(3,5)$-regular LDPC code. For comparison, the
  corresponding boundary of the $(C_{bq},R_l)$ region for each case is
  also included in the figure. For the second case, the $(R_k,R_l)$
  rate pair achieved by the scheme proposed in~\cite{Muramatsu2006} is
  denoted by the square symbol. The code used in that scheme is
  obtained by concatenating the $(3,5)$-regular LDPC parity-check
  matrix and another $(1,3)$-regular LDPC parity-check matrix.}
\label{fig:ck_rl_regular} 
\end{figure}

\begin{figure}
\centering
\begin{center}
\includegraphics[width=0.57\textwidth]{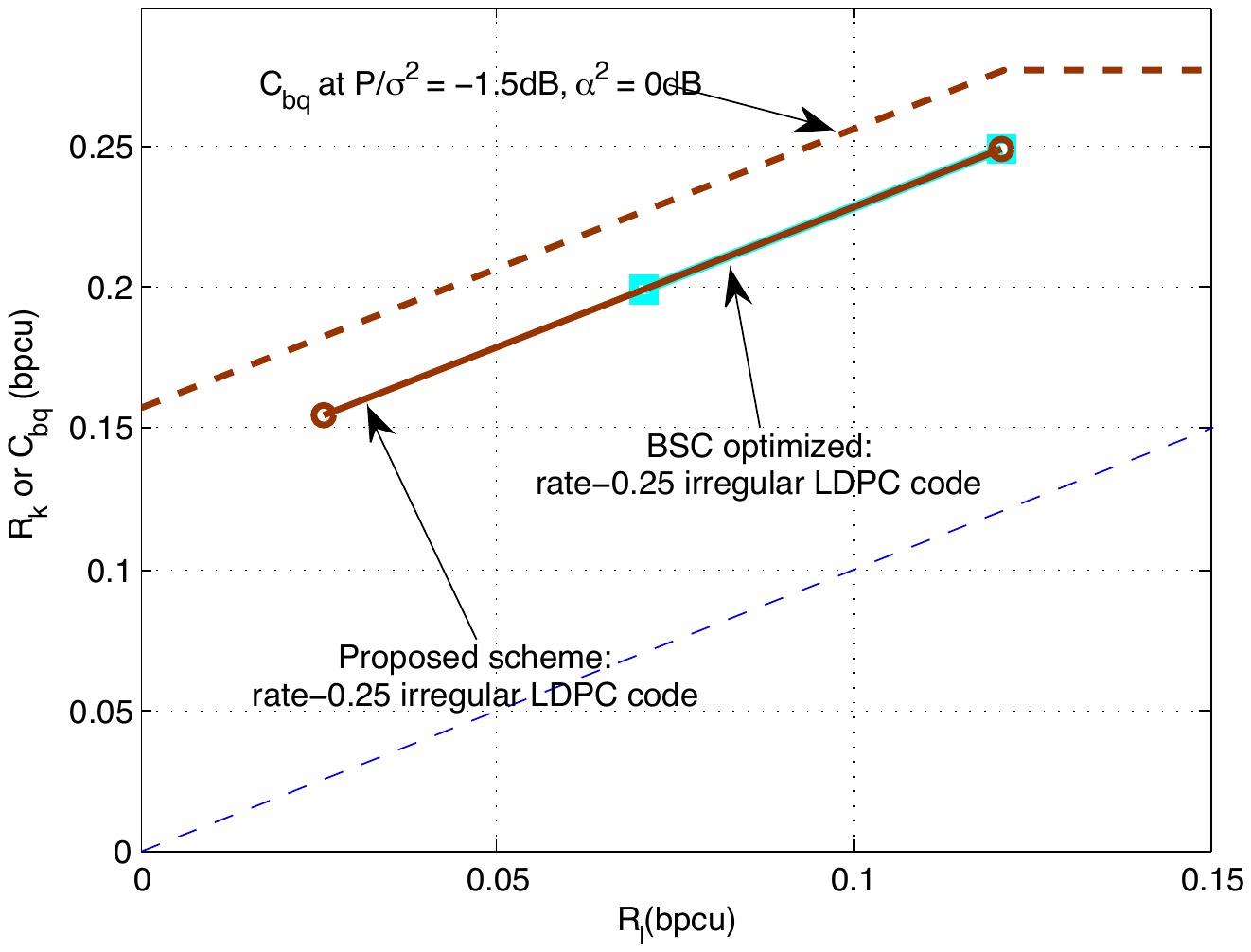}
\end{center}
\vspace*{-20pt}
\caption{Plot (with circle markers) of the $(R_k,R_l)$-trajectory
  achieved by the proposed secret sharing scheme employing the
  rate-$0.25$ secret sharing irregular LDPC code obtained from the
  code search process described in~\autoref{subsec:irregldpc}. The
  block length is set to $10^{5}$.  The channel parameter setting of
  $P/\sigma^2=-1.5$~dB and $\alpha^2=0$~dB is assumed.  The boundary
  of the $(C_{bq},R_l)$ region for this set of channel parameters is
  included in the figure. The $(R_k,R_l)$-trajectory achieved by the
  proposed secret sharing scheme employing a standard rate-$0.25$
  BSC-optimized irregular LDPC code instead is also plotted (with
  square markers) for comparison.}
\label{fig:ck_rl_irregular_0dB} 
\end{figure}

\begin{figure}
\centering
\begin{center}
\includegraphics[width=0.57\textwidth]{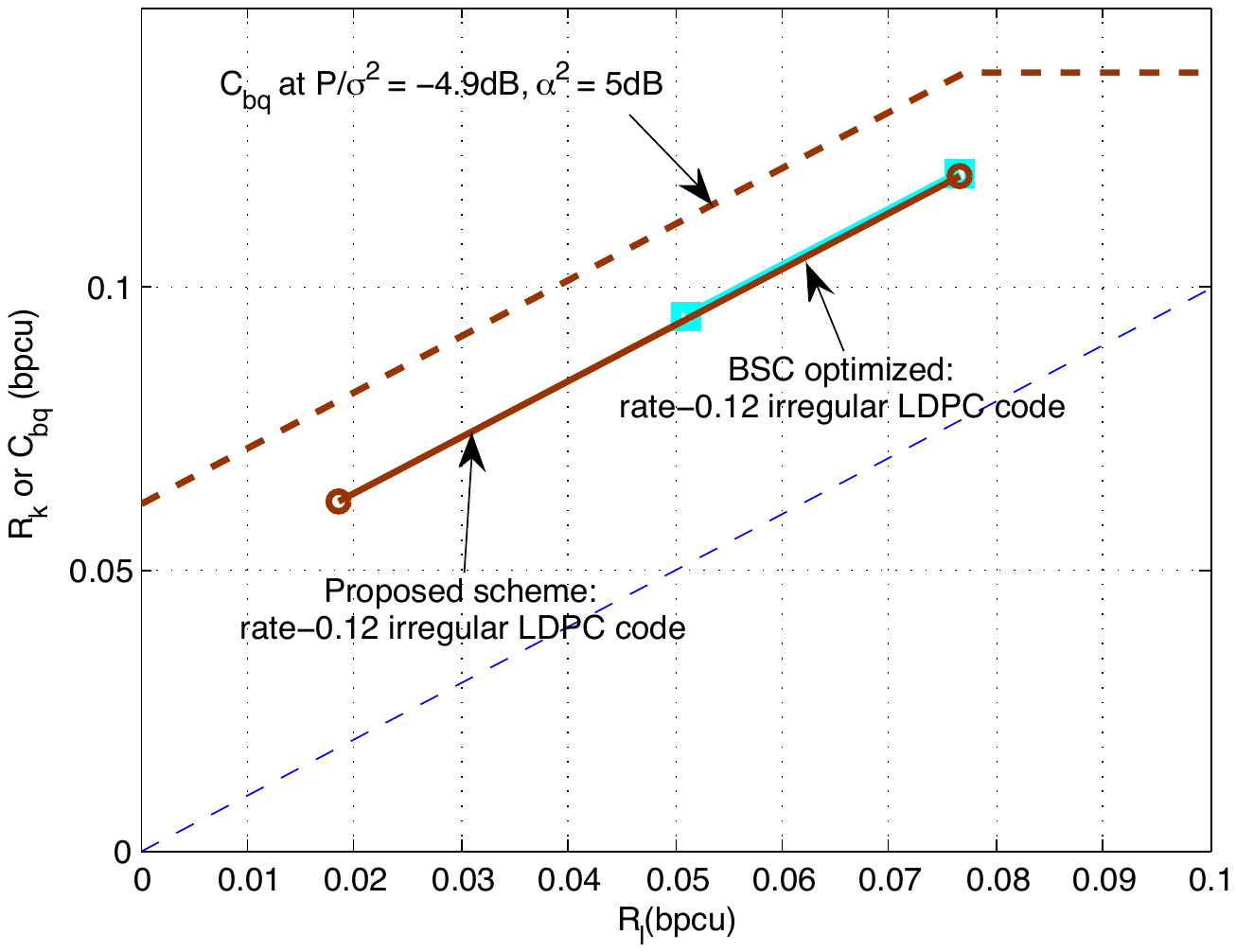}
\end{center}
\vspace*{-20pt}
\caption{Plot (with circle markers) of the $(R_k,R_l)$-trajectory
  achieved by the proposed secret sharing scheme employing the
  rate-$0.12$ secret sharing irregular LDPC code obtained from the
  code search process described in~\autoref{subsec:irregldpc}. The
  block length is set to $10^{5}$.  The channel parameter setting of
  $P/\sigma^2=-4.9$~dB and $\alpha^2=5$~dB is assumed.  The boundary
  of the $(C_{bq},R_l)$ region for this set of channel parameters is
  included in the figure. The $(R_k,R_l)$-trajectory achieved by the
  proposed secret sharing scheme employing a standard rate-$0.12$
  BSC-optimized irregular LDPC code instead is also plotted (with
  square markers) for comparison.}
\label{fig:ck_rl_irregular_5dB} 
\end{figure}


\begin{thebibliography}{10}
\providecommand{\url}[1]{#1}
\csname url@samestyle\endcsname
\providecommand{\newblock}{\relax}
\providecommand{\bibinfo}[2]{#2}
\providecommand{\BIBentrySTDinterwordspacing}{\spaceskip=0pt\relax}
\providecommand{\BIBentryALTinterwordstretchfactor}{4}
\providecommand{\BIBentryALTinterwordspacing}{\spaceskip=\fontdimen2\font plus
\BIBentryALTinterwordstretchfactor\fontdimen3\font minus
  \fontdimen4\font\relax}
\providecommand{\BIBforeignlanguage}[2]{{%
\expandafter\ifx\csname l@#1\endcsname\relax
\typeout{** WARNING: IEEEtran.bst: No hyphenation pattern has been}%
\typeout{** loaded for the language `#1'. Using the pattern for}%
\typeout{** the default language instead.}%
\else
\language=\csname l@#1\endcsname
\fi
#2}}
\providecommand{\BIBdecl}{\relax}
\BIBdecl

\bibitem{Wyner1975}
A.~Wyner, ``The wire-tap channel,'' \emph{Bell Syst. Tech. J.}, vol.~54, pp.
  1355--1387, Oct. 1975.

\bibitem{Csiszar1978}
I.~Csis\'{z}ar and J.~Korner, ``Broadcast channels with confidential
  messages,'' \emph{{IEEE} Trans. Inform. Theory}, vol.~24, no.~3, pp.
  339--348, May 1978.

\bibitem{Leung1978}
S.~K. Leung-Yan-Cheong and M.~E. Hellman, ``{The {G}aussian wire-tap
  channel},'' \emph{{IEEE} Trans. Inform. Theory}, vol.~24, no.~4, pp.
  451--456, Jul 1978.

\bibitem{Ozarow1984}
L.~Ozarow and A.~D. Wyner, ``Wire-tap channel {II},'' \emph{Bell Syst. Tech.
  J.}, vol.~63, no.~10, pp. 2135--2157, Dec. 1984.

\bibitem{Thangaraj2007}
A.~Thangaraj, S.~Dihidar, A.~R. Calderbank, S.~McLaughlin, and J.-M. Merolla,
  ``Applications of {LDPC} codes to the wiretap channel,'' \emph{{IEEE} Trans.
  Inform. Theory}, vol.~53, no.~8, pp. 2933--2945, Aug. 2007.

\bibitem{LiuRuoheng07}
R.~Liu, Y.~Liang, H.~Poor, and P.~Spasojevic, ``Secure nested codes for type
  {II} wiretap channels,'' \emph{Proc. IEEE 2007 Inform. Theory Workshop}, pp.
  337--342, Sept. 2007.

\bibitem{Mahdavifar2010}
H.~Mahdavifar and V.~Vardy, ``Achieving the secrecy capacity of wiretap
  channels using polar codes,'' \emph{Arxiv preprint arXiv:1001.0210}, 2010.

\bibitem{Koyluoglu2010}
O.~O. Koyluoglu and H.~E. Gamal, ``Polar coding for secure transmission and key
  agreement,'' \emph{Arxiv preprint arXiv:1003.1422}, 2010.

\bibitem{Arikan2009}
E.~Arikan, ``Channel polarization: A method for contructing capacity-achieving
  codes for symmetric binary-input memoryless channels,'' \emph{IEEE Trans.
  Inform. Theory}, vol.~55, pp. 3051--3073, Jul. 2009.

\bibitem{Maurer1993}
U.~M. Maurer, ``Secret key agreement by public discussion from common
  information,'' \emph{{IEEE} Trans. Inform. Theory}, vol.~39, no.~3, pp.
  733--742, May 1993.

\bibitem{Ahlswede1993}
R.~Ahlswede and I.~Csis\'{z}ar, ``{Common randomness in information theory and
  cryptography. I. Secret sharing},'' \emph{{IEEE} Trans. Inform. Theory},
  vol.~39, no.~4, pp. 1121--1132, July 1993.

\bibitem{Brassard94}
G.~Brassard and L.~Salvail, ``Secret-key reconciliation by public discussion,''
  \emph{Advances in Crypotology - Eurocrypt'93}, pp. 410--423, 1994.

\bibitem{Nguyen04}
K.-C. Nguyen, G.~{Van Assche}, and N.~J. Cerf, ``Side-information coding with
  turbo codes and its application to quantum key distribution,'' in \emph{Proc.
  2004 IEEE Int. Symp. Inform. Theory and Applicat.}, Param, Italy, Oct. 2004.

\bibitem{VanAssche2004}
G.~Van~Assche, J.~Cardinal, and N.~Cerf, ``Reconciliation of a
  quantum-distributed {G}aussian key,'' \emph{{IEEE} Trans. Inform. Theory},
  vol.~50, no.~2, pp. 394--400, Feb. 2004.

\bibitem{Muramatsu2006}
J.~Muramatsu, ``Secret key agreement from correlated source outputs using low
  density parity check matrices,'' \emph{IEICE Transactions on Fundamentals of
  Electronics, Communications and Computer Sciences}, vol. E89-A, pp.
  2036--2046, July 2006.

\bibitem{Ye2006}
C.~Ye, A.~Reznik, and Y.~Shah, ``Extracting secrecy from jointly {G}aussian
  random variables,'' in \emph{Proc. IEEE Int. Symp. Inform. Theory (ISIT
  2006)}, July 2006, pp. 2593--2597.

\bibitem{Bloch2008}
M.~Bloch, J.~Barros, M.~Rodrigues, and S.~McLaughlin, ``Wireless
  information-theoretic security,'' \emph{{IEEE} Trans. Inform. Theory},
  vol.~54, no.~6, pp. 2515--2534, June 2008.

\bibitem{Elkouss2009}
D.~Elkouss, A.~Leverrier, R.~All\'{e}aume, and J.~Boutros, ``Efficient
  reconciliation protocol for discrete-variable quantum key distribution,''
  \emph{Arxiv preprint arXiv:1001.0210}, 2009.

\bibitem{Klinc2009}
D.~Klinc, J.~Ha, S.~M. McLaughlin, J.~Barros, and B.-J. Kwak, ``{LDPC} codes
  for the {G}aussian wiretap channel,'' \emph{Proc. IEEE 2009 Inform. Theory
  Workshop}, pp. 95--99, Oct. 2009.

\bibitem{Baldi2010}
M.~Baldi, M.~Bianchi, and F.~Chiaraluce, ``Non-systematic codes for physical
  layer security,'' \emph{Arxiv preprint arXiv:1004.4520}, 2010.

\bibitem{Wong2009}
T.~F. Wong, M.~Bloch, and J.~M. Shea, ``Secret sharing over fast-fading {MIMO}
  wiretap channels,'' \emph{EURASIP Journal on Wireless Communications and
  Networking, Special Issue on Wireless Physical Layer Security}, Sep. 2009.

\bibitem{RichardsonIT01_1}
T.~Richardson and R.~Urbanke, ``The capacity of low-density parity-check codes
  under message-passing decoding,'' \emph{IEEE Trans. Inform. Theory}, vol.~47,
  no.~2, pp. 599--618, Feb. 2001.

\bibitem{tanner1981ral}
R.~Tanner, ``{A recursive approach to low complexity codes},'' \emph{IEEE
  Trans. Inform. Theory}, vol.~27, no.~5, pp. 533--547, Sept. 1981.

\bibitem{LiverisLDPC02}
A.~Liveris, Z.~Xiong, and C.~Georghiades, ``Compression of binary sources with
  side information at the decoder using {LDPC} codes,'' \emph{IEEE Commun.
  Lett.}, vol.~6, no.~10, pp. 440--442, Oct. 2002.

\bibitem{Miller2001}
G.~Miller and D.~Burshtein, ``Bounds on the maximum-likelihood decoding error
  probability of low-density parity-check codes,'' \emph{IEEE Trans. Inform.
  Theory}, vol.~47, no.~7, pp. 2696--2710, Nov. 2001.

\bibitem{Bennatan2004}
A.~Bennatan and D.~Burshtein, ``On the application of {LDPC} codes to arbitrary
  discrete-memoryless channels,'' \emph{IEEE Trans. Inform. Theory}, vol.~50,
  no.~3, pp. 417--438, Mar. 2004.

\bibitem{RichardsonIT01_3}
T.~Richardson and R.~Urbanke, ``Efficient encoding of low-density parity-check
  codes,'' \emph{IEEE Trans. Inform. Theory}, vol.~47, no.~2, pp. 638--656,
  Feb. 2001.

\bibitem{chung2001dld}
S.~Chung, G.~D. {Forney, Jr.}, T.~J. Richardson, and R.~Urbanke, ``On the
  design of low-density parity-check codes within 0.0045 d{B} of the {S}hannon
  limit,'' \emph{IEEE Commun. Lett.}, vol.~5, no.~2, pp. 58--60, Feb. 2001.

\bibitem{Urbanke01}
\BIBentryALTinterwordspacing
R.~Urbanke, ``Degree distribution optimizer for {LDPC} code ensembles,'' 2001.
  [Online]. Available: \url{http://ipgdemos.epfl.ch/ldpcopt/}
\BIBentrySTDinterwordspacing

\bibitem{Kac02}
V.~Kac and P.~Cheung, \emph{Quantum Calculus}.\hskip 1em plus 0.5em minus
  0.4em\relax New York: Springer-Verlag, 2002.

\end{thebibliography}
\end{document}